\newtheorem{theorem}{Theorem}[section]
\newtheorem{proposition}[theorem]{Proposition}
\newtheorem{question}[theorem]{Question}
\newtheorem{definition}[theorem]{Definition}
\theoremstyle{remark}
\newtheorem{remark}[theorem]{Remark}
\newtheorem{example}[theorem]{Example}
\DeclareMathOperator{\Aut}{Aut}
\newcommand{\FF}{{\mathbf{F}}}
\newcommand{\PP}{{\mathbf{P}}}
\newcommand{\ZZ}{{\mathbf{Z}}}
\renewcommand{\aa}{{\mathbf{a}}} 
\newcommand{\bb}{{\mathbf{b}}}
\newcommand{\cc}{{\mathbf{c}}}
\newcommand{\calC}{{\mathcal{C}}}
\newcommand{\calL}{{\mathcal{L}}}
\newcommand{\Fq}{\FF_q}
\newcommand{\scrO}{{\mathscr{O}}}
\newcommand{\BibTeX}{{\textrm{B}\kern-.05em{\textsc{i}\kern-.025em \textsc{b}}\kern-.08em \TeX}}
\newcommand{\Vladut}{Vl\u adu\c t}
\newcommand\lowtilde{\lower0.7ex\hbox{\textasciitilde}}
\begin{document}

\title[Locally recoverable codes]{Locally recoverable codes 
\\from algebraic curves and surfaces}

\author[Barg]{Alexander Barg}               
\address{Department of Electrical and Computer Engineering, 
         Institute for Systems Research,
         University of Maryland, 
         College Park, MD 20742, USA and 
         Institute for Information Transmission Problems, 
         Russian Academy of Sciences, 
         Moscow, Russia}
\email{abarg@umd.edu}

\author[Haymaker]{\hbox{Kathryn Haymaker}}  
\address{Department of Mathematics and Statistics,
         Villanova University,
         800 Lancaster Avenue 
         Villanova, PA 19085, USA}
\email{kathryn.haymaker@villanova.edu}

\author[Howe]{\hbox{Everett W. Howe}} 
\address{Center for Communications Research,
         4320 Westerra Court,
         San Diego, CA 92121, USA}
\email{however@alumni.caltech.edu}
\urladdr{http://www.alumni.caltech.edu/\lowtilde{}however/}

\author[Matthews]{\hbox{Gretchen L. Matthews}}
\address{Department of Mathematical Sciences,
         Clemson University,
         Clemson, SC 29634, USA}
\email{gmatthe@clemson.edu}

\author[V\'arilly-Alvarado]{\hbox{Anthony V\'arilly-Alvarado}}
\address{Department of Mathematics, 
         Rice University,
         6100 S. Main St.   
         Houston, TX 77005, USA }
\email{av15@rice.edu}

\date{\today}
\keywords{}

\subjclass[2010]{Primary 94B27; Secondary 11G20, 11T71, 14G50, 94B05} 

\begin{abstract}
A \emph{locally recoverable code} is a code over a finite alphabet such that
the value of any single coordinate of a codeword can be recovered from the 
values of a small subset of other coordinates.  Building on work of Barg, Tamo,
and \Vladut, we present several constructions of locally recoverable codes from
algebraic curves and surfaces.
\end{abstract}

\maketitle

\tableofcontents
\section{Introduction}
\label{S:intro}
A code $\calC$ of length $n$ over a finite field~$\Fq$ is a subset of the linear space $\Fq^n$. 
The code is called linear if it forms a linear subspace of $\Fq^n$. The minimum distance of the code $d$ is the minimum
pairwise separation between two distinct elements of $\calC$ in the Hamming metric, and if the code $\calC$ is linear, then $d$ is equal to the minimum Hamming weight of a nonzero codeword of $\calC$.  We use the notation $(n,k,d)$ to refer to the parameters
of a linear $k$-dimensional code of length $n$ and minimum distance~$d$. 

Applications of codes in large-scale distributed storage systems motivate studying codes with locality constraints. One of the 
first code families of this kind was proposed in \cite{huang2007pyramid}. 
The concept of locality in codes was formalized in the following definition of locally recoverable codes, or LRCs.
\begin{definition}[LRCs, \cite{GopalanHuangEtAl2012}]\label{def:LRC} 
A code $\calC\subset \Fq^n$ is \emph{locally recoverable with locality $r$} if for every $i\in \{1,2,\dots,n\}$
there exists a subset $I_i\subset \{1,2,\dots,n\}\backslash \{i\}$
of cardinality at most $r$ and a function $\phi_i$ such that for every codeword $x\in\calC$ we have
   \begin{equation}\label{eq:def1}
   x_i=\phi_i(\{x_j,j\in I_i\}).
   \end{equation}
This definition can be also rephrased as follows. Given $a\in \Fq,$ consider the sets of codewords
   \begin{equation*}
   \calC(i,a)=\{x\in \calC: x_i=a\},\quad i\in\{1,2,\dots,n\}.
   \end{equation*}
    The code $\calC$ is said to have  \emph{locality} $r$ if for every $i$ there exists a subset $I_i\subset \{1,2,\dots,n\}\backslash i$ of cardinality at most $r$ 
    such that the restrictions of the sets $\calC(i,a)$ to
the coordinates in $I_i$ for different $a$ are disjoint:
 \begin{equation}\label{eq:def}
 \calC_{I_i}(i,a)\cap \calC_{I_i}(i,a')=\emptyset,\quad  a\ne a'.
 \end{equation}
\end{definition}

This definition applies to all codes without the linearity assumption, but throughout this paper we will assume that the codes are 
linear. The subset $I_i$ is called the \emph{recovery set} of the coordinate $i$. We note that every code with distance $\ge 2$ trivially has locality $r=k.$ At the same time, for applications of codes to storage systems we would like to have codes with small locality, high rate $k/n$, and
large distance $d$. The following Singleton-type inequality relating these quantities was proved in~\cite{GopalanHuangEtAl2012}:
\begin{equation}
\label{eq:singleton}
d\le n - k - \left\lceil\frac{k}{r}\right\rceil +2.
\end{equation}
If we use $r=k$ in this inequality, then this bound reduces to the classical Singleton bound of coding theory (e.g., \cite[p.\ 33]
{MaSl1977}). In the classical case a few code families are known to meet this bound, and the most well known among them is the family of Reed--Solomon codes.

Codes with locality whose parameters meet the bound \eqref{eq:singleton} with equality are called \emph{optimal LRCs}. Already in~\cite{GopalanHuangEtAl2012} 
it was shown that optimal LRCs do exist, and subsequent papers \cite{SilbersteinRawatEtAl2015,TPDMatroids} introduced several different constructions of optimal LRCs. These constructions relied on fields of size $q$ much larger than the code
length $n,$ for instance, $q=\exp(\Omega(n))$, while in the classical case, Reed--Solomon codes only require $q=n.$ This 
obstacle was removed in~\cite{TamoBarg2014}, which constructed optimal LRCs whose structure is analogous to the Reed--Solomon codes. In particular, the field size $q$ required for the codes of \cite{TamoBarg2014} is only slightly greater than the code length $n$.

The codes in~\cite{TamoBarg2014} are obtained from maps of degree $r+1$ from
$\PP^1_{\Fq}$ to $\PP^1_{\Fq}$. In~\cite{BargTamoEtAl2015}, the authors generalized 
this idea and constructed locally recoverable codes from morphisms of algebraic
curves; in particular, they produced examples from Hermitian curves and
Garcia--Stichtenoth curves.  In this paper we expand on the constructions 
of~\cite{BargTamoEtAl2015} to produce families of LRCs coming from a larger
variety of curves, as well as from higher-dimensional varieties.

There are several variations of the basic definition of LRC codes. One of them, also motivated by
applications, suggests to look for codes in which every coordinate $i$ has several disjoint
recovery sets, increasing the availability of data in storage. As pointed out in \cite{BargTamoEtAl2015}, 
codes with $t\ge 2$ recovery sets for every coordinate can be constructed from fiber products of curves. 
In this paper we construct codes with two recovery sets from fiber products of elliptic curves.
Recently this idea was further developed in \cite{Malmskog2016}, where the authors constructed examples of LRCs with multiple disjoint recovery sets from the Giulietti--Korchm{\'a}ros, Suzuki, and Hermitian curves. 
Codes with multiple recovery sets were also considered in \cite{CyclicLRC16}. The approach in that paper is different from the present work and relies on considering subfield subcodes of cyclic codes from \cite{TamoBarg2014}.

In conclusion let us remark that the problem of bounds for the parameters of LRCs, including asymptotic bounds, was studied in \cite{TamoBargEtAl2016,CaMa2015,Agarwal16}. In partiuclar,
these papers derived an asymptotic \emph{Gilbert--Varshamov type} bound for LRCs under the assumption of
constant $r$ and $n\to\infty.$ As shown in \cite {BargTamoEtAl2015}, 
the Garcia--Stichtenoth curves give families of LRCs whose parameters asymptotically exceed the Gilbert--Varshamov bound. In this paper we do not consider the asymptotic problem, focusing on finite code length.

In Section~\ref{sec:general} we present the general construction of LRCs from morphisms of varieties. In Section~\ref{sec:ECs} we construct a number of examples of codes from elliptic curves. In Section~\ref{sec:quartics} we construct codes with locality $r=3$ from plane quartics, and in Section~\ref{sec:highergenus} we study codes with locality $r=2$ from curves with automorphisms of order 3. In Section~\ref{sec:availability} we construct codes with 2 recovery sets from fiber products of elliptic curves. In Section~\ref{sec:higherdimension} we present a general construction of LRCs from algebraic surfaces and construct code families from
cubic surfaces, quartic K3 surfaces, and quintic surfaces. Finally, in 
Section~\ref{sec:conclusion} we collect and discuss the parameters of codes constructed in this paper.

\section{The general construction}
\label{sec:general}
In this section we present a variant of the construction
of~\cite[\S III]{BargTamoEtAl2015}, using slightly different assumptions
and notation.

Let $\varphi\colon X\to Y$ be a degree-$(r+1)$ morphism of 
projective smooth absolutely irreducible curves over~$K = \Fq$.
Let $Q_1, Q_2, \ldots, Q_s$ be points of $Y(K)$ that split completely 
in the cover $X\to Y$; that is, for each $Q_i$, there are $r+1$ points 
$P_{i,0}, P_{i,1}, \ldots, P_{i,r}$ in $X(K)$ that map to $Q_i$.

The map $\varphi$ induces an injection of function fields $\varphi^*\colon K(Y)\hookrightarrow K(X)$
that makes $K(X)$ a degree-$(r+1)$ extension of $K(Y)$.
Let $e_1, e_2, \ldots, e_r$ be elements of $K(X)$ that are linearly independent 
over $K(Y)$ and whose polar sets are disjoint from the $P_{i,j}$, and 
let $f_1, f_2,\ldots, f_t$ be elements of $K(Y)$ that are
linearly independent over~$K$ and whose polar sets are disjoint from the $Q_i$.

Let $S$ be the set of all pairs of integers $(i,j)$ with 
$1\le i\le r$ and $1\le j\le t$, and let $T$ denote the set of all
pairs of integers $(i,j)$ with $1\le i\le s$ and $0\le j\le r$.
We define a map $\gamma\colon K^S\to K^T$ as follows:
Given a vector $\aa = (a_{i,j})\in K^S$, let $f_\aa$
be the function
\[f_\aa := \sum_{i=1}^r e_i \sum_{j=1}^t a_{i,j} \varphi^* f_j,\]
and set
$\gamma(\aa) = \bb = (b_{i,j})\in K^T$ where
$b_{i,j} = f_\aa(P_{i,j})$ for all $i,j$ with $1\le i\le s$ and $0\le j\le r$.

Let $D$ be the smallest effective divisor on $X$ so that each product $e_i \varphi^* f_j$
lies in the Riemann--Roch space $\calL(D)$, and let $\delta$ be the degree of~$D$.
If $\delta < s(r+1)$ then the image $\calC$ of $\gamma$ is a linear code of dimension
$k = rt$, length $n = s(r+1)$, and minimum distance $d$ at least $s(r+1)-\delta$.

Under certain conditions, the code $\calC$ also has locality $r$, as we now explain.
For each $i =  1,\ldots, s$ let
\[H_i = \varphi^{-1}(Q_i) = \{ P_{i,j} \mid j \in\{0,\ldots,r\} \}.\]
We call the sets $H_i$ the \emph{helper sets} of the code.  

\begin{proposition}
\label{prop:matrices}
Let $i$ be an integer between $1$ and $s$, and suppose
every $r\times r$ submatrix of the matrix
\[
M := 
\begin{bmatrix}
e_1(P_{i,0}) & e_2(P_{i,0}) & \cdots & e_r(P_{i,0}) \\
e_1(P_{i,1}) & e_2(P_{i,1}) & \cdots & e_r(P_{i,1}) \\
\vdots  & \vdots & \ddots & \vdots \\
e_1(P_{i,r}) & e_2(P_{i,r}) & \cdots & e_r(P_{i,r}) \\
\end{bmatrix}
\]
is invertible.  Let $f$ be a function on $X$ such that $f = f_\aa$ for some
$\aa \in K^S$.  Then the value of $f$ on any point in the helper set $H_i$
can be calculated from the values of $f$ on the other points in the helper set.
\end{proposition}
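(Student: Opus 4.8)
The plan is to exploit the structure of $f_\aa$ restricted to a single helper set $H_i$. Write $f = f_\aa = \sum_{i'=1}^r e_{i'} \cdot \varphi^*(g_{i'})$ where $g_{i'} = \sum_{j=1}^t a_{i',j} f_j \in K(Y)$. The key observation is that since $\varphi(P_{i,j}) = Q_i$ for every $j \in \{0,1,\dots,r\}$, the pulled-back functions $\varphi^* g_{i'}$ take the \emph{same} value $c_{i'} := g_{i'}(Q_i)$ at every point $P_{i,j}$ of the helper set $H_i$. Hence, setting $c = (c_1,\dots,c_r)^{\mathsf T} \in K^r$, the vector of values $\bigl(f(P_{i,0}), f(P_{i,1}), \dots, f(P_{i,r})\bigr)^{\mathsf T}$ is exactly $Mc$, where $M$ is the $(r+1)\times r$ matrix in the statement. (One should note the $e_{i'}$ and $g_{i'}$ are regular at the $P_{i,j}$ because the polar sets of the $e_i$ avoid the $P_{i,j}$ and the polar sets of the $f_j$ avoid the $Q_i$, so all these evaluations make sense.)

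Next I would argue that $c$ is determined by any $r$ of the $r+1$ values $f(P_{i,j})$. Fix a point $P_{i,j_0}$ whose value we wish to recover, and let $M'$ be the $r\times r$ submatrix of $M$ obtained by deleting the row indexed by $j_0$; by hypothesis $M'$ is invertible. The corresponding subvector of $Mc$ is $M'c$, so $c = (M')^{-1} \cdot \bigl(f(P_{i,j})\bigr)_{j\neq j_0}$ is computable from the $r$ known values. Once $c$ is in hand, $f(P_{i,j_0})$ is recovered as the $j_0$-th entry of $Mc$, i.e. as the dot product of $c$ with the row $\bigl(e_1(P_{i,j_0}),\dots,e_r(P_{i,j_0})\bigr)$.

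There is no real obstacle here; the statement is essentially a linear-algebra repackaging of the fact that $\varphi^* g$ is constant on fibers of $\varphi$. The one point that needs a word of care is the regularity of the functions at the relevant points, so that the matrix $M$ and the values $f(P_{i,j})$ are genuinely well-defined elements of $K$ — but this is guaranteed by the disjointness-of-polar-sets assumptions built into the setup of Section~\ref{sec:general}. I would also remark, for use later, that the recovery map is $K$-linear in the known values, which is what makes the resulting code linear; but this is immediate from the formula $f(P_{i,j_0}) = \bigl(e_1(P_{i,j_0}),\dots,e_r(P_{i,j_0})\bigr)(M')^{-1}\bigl(f(P_{i,j})\bigr)_{j\neq j_0}$.
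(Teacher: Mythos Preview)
Your proof is correct and follows essentially the same approach as the paper's: both observe that the pulled-back functions $\varphi^* g_{i'}$ are constant on the fiber $H_i$, so the value vector $(f(P_{i,j}))_j$ equals $Mc$ for the vector $c=(g_{i'}(Q_i))_{i'}$, and then invertibility of any $r\times r$ submatrix lets one solve for $c$ from any $r$ of the values and recover the missing one. Your added remarks on regularity and linearity of the recovery map are fine and do not alter the argument.
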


\begin{proof}
Suppose the hypothesis of the proposition holds, and 
suppose $f = f_\aa$ for some element $\aa = (a_{u,v})$ of $K^S$.  We have
\[
f = \sum_{u=1}^r e_u \sum_{v=1}^t a_{u,v} \varphi^* f_v
\]
so for every $j=0,\ldots,r$ we have
\begin{align*}
f(P_{i,j}) &= \sum_{u=1}^r e_u(P_{i,\ell}) \sum_{v=1}^t a_{u,v} \varphi^* f_v(P_{i,\ell})\\
           &= \sum_{u=1}^r e_u(P_{i,\ell}) \sum_{v=1}^t a_{u,v} f_v(Q_i)\\
           &= \sum_{u=1}^r c_u e_u(P_{i,\ell})
\end{align*}
where 
\[ c_u = \sum_{v=1}^t a_{u,v} f_v(Q_i).\]
Since we know the values of $e_u(P_{i,j})$ for every $u$ and $j$, and since
every $r\times r$ submatrix of $M$ is invertible, we can calculate the values of 
the $c_u$ from the values of any subset of $r$ of the $f(P_{i,j})$.  Thus,
the value of any $f(P_{i,j})$ can be computed from the values of $f$ on
the $P_{i,\ell}$ with $\ell \ne j$.
\end{proof}

We see that by choosing a cover $\varphi\colon X\to Y$ together with
splitting points $Q_i \in Y$ and functions $f_i\in K(Y)$ and $e_i\in K(X)$
satisfying certain conditions, we wind up with a locally recoverable Goppa code.

In practice, to construct a code in this way from a given morphism
$\varphi\colon X \to Y$ one would most likely choose the divisor $D$ to begin
with, and then find functions $e_i$ on $X$ and $f_j$ on $Y$ such that each
product $e_i f_j$ lies in $\calL(D)$.

So where do we find a supply of covers $\varphi\colon X\to Y$?  One source
is to consider curves $X$ with nontrivial automorphism groups.  Suppose 
the automorphism group of $X$ contains a subgroup $G$ of order~$r+1$.
Then we can define $Y$ to be the quotient of $X$ by~$G$, and we have
a canonical Galois covering $\varphi\colon X\to Y$ with group~$G$.
This is the strategy we will use for most of the rest of this paper
when constructing LRCs from coverings of curves.
However, as we will see in section~\ref{sec:highergenus}, it is sometimes
better to fix a base curve $Y$ and consider families 
of Galois covers of $Y$ with group~$G$.

\section{Locally recoverable codes from elliptic curves}
\label{sec:ECs}

In this section we show how elliptic curves can be used to create codes
of locality $r$ for arbitrary~$r$.

Given an integer $r>1$, choose a prime power $q$ such that there is an 
elliptic curve $E$ over the finite field $K = \Fq$ such that $E(K)$ has order 
divisible by $r+1$.  Such an $E$ will exist, for example, if there is a
multiple of $(r+1)$ that is coprime to $q$ and that lies in the interval
$[q + 1 - 2\sqrt{q}, q + 1 + 2\sqrt{q}]$.  Since the group order of $E$ is 
divisible by $r+1$, there is a subgroup $G$ of $E(K)$ of order $r+1$.
Let $E'$ be the quotient of $E$ by this subgroup, so that there is an isogeny
$\varphi\colon E \to E'$ of degree $r+1$ having $G$ as its kernel.

The simplest way to construct an LRC from the morphism $\varphi$ is to 
take the $Q_i$ to be the points of $E'(K)$, other than the point at infinity,
that lie in the image of $E(K)$ under $\varphi$.  The points $P_{i,j}$ will 
then consist of all of the points of $E(K)$ that do not lie in the subgroup $G$,
and the helper sets will be the cosets of $G$ in $E(K)$ other than $G$ itself.

We can write $E$ and $E'$ in the form
\begin{align*}
E \colon \qquad y^2 + a_1 xy + a_3 y &= x^3 + a_2 x^2 + a_4 x + a_6\\
E'\colon \qquad v^2 + b_1 uv + b_3 v &= u^3 + b_2 u^2 + b_4 u + b_6
\end{align*}
and the embedding $\varphi^*\colon K(E')\to K(E)$ takes $u$ and~$v$ to
rational functions of $x$ and~$y$.

Let a positive integer $t$ be fixed. One choice for the functions
$f_1,\ldots,f_t \in K(E')$ is 
\[ f_1 = 1, \quad f_2 = u, \quad f_3 = v, \quad f_4 = u^2, \quad f_5 = uv, 
\quad f_6 = u^3, \quad f_7 = u^2v, \quad \ldots\]
and so forth, so that the $f_i$ are the functions on $E'$ with poles only at infinity, and 
of degree at most $t$.  Similarly, we can take the functions $e_1,\ldots,e_r\in K(E)$
to be 
\[ e_1 = 1, \quad e_2 = x, \quad e_3 = y, \quad e_4 = x^2, \quad e_5 = xy,
\quad e_6 = x^3, \quad e_7 = x^2y, \quad \ldots,\]
that is, the functions on $E$ with poles only at infinity and of degree at most~$r$.

We may take the divisor $D$ on $E$ to be
$D = tG + r\infty$; note that $D$ has degree $tr + t + r$.  The Goppa code we
obtain has the following parameters:
\begin{align*}
n &= \#E(K) - (r+1),\\
k &= tr,\\
d &= n - (tr + t + r),
\end{align*}
and if the matrices from Proposition~\ref{prop:matrices} meet the hypotheses of
that proposition, then our code will have locality~$r$.

Recall the Singleton-type bound on the minimum distance in \eqref{eq:singleton}. 
For our choices, we find that our minimum distance $d$ is $r+2$ less
than the Singleton bound.

\begin{example}
\label{ex:firstECGF64}
For our first example, we take $K = \FF_{64}$ and we take $E$ to be the elliptic curve $y^2 + y = x^3$ 
over $K$.  The group $E(K)$ has order $81$, so we may take $r = 2$ and
take $G$ to be the subgroup of order $3 = r + 1$ that contains the points
$(0,0)$ and $(0,1)$, along with $\infty$.
The quotient $E'$ of $E$ by $G$ can be written 
$v^2 + v = u^3 + 1$, with the isogeny $\varphi$ given by
\begin{align*}
u &= x + \frac{1}{x^2}\\
v &= y + \frac{1}{x^3}.
\end{align*}
(In fact, the curve $E'$ is isomorphic to $E$, but the equations for the isogeny 
are simpler if we use this alternate model for $E'$.)

Our functions $e_1$ and $e_2$ are $1$ and $x$, so the hypotheses of 
Proposition~\ref{prop:matrices} will be met for all nontrivial cosets of $G$
if $x(P_1) \ne x(P_2)$ for all $P_1$ and $P_2$, not in $G$,
that differ by a nonzero element of $G$. 
The $x$-coordinates of two distinct points on $E$ are equal if and
only if they are additive inverses, so we need to check that if $P$ is a 
point in $E(K)$ that does not lie in $G$, then $2P$ is not an element of $G$.
Since the group of points $E(K)$ has odd order, this is true.

We see that for any $t$ with $1\le t\le 25$, we get an LRC with length $78$,
dimension $2t$, and minimum distance $76 - 3t$.  For example, with $t=21$
we get a $(78,42,13)$-code with locality~$2$.
\end{example}

\begin{example}
\label{ex:firstECGF32}
In this example, we take $K = \FF_{32}$ and $E$ to be the elliptic curve
$y^2 + xy = x^3 + x$ over $K$.  There are $44$ points in $E(K)$, so we may take 
$r = 3$ and $G$ to be the cyclic subgroup of order $4$ consisting of the points 
on $E$ rational over $\FF_2$.  The quotient of $E$ by $G$ is a curve $E'$ that 
is isomorphic to $E$, so we may write $E'$ as $v^2 + uv = u^3 + u$,
where the isogeny $\varphi$ is given by
\begin{align*}
u &= \frac{(x^2 + x + 1)^2}{x(x+1)^2}\\
v &= \frac{(x^2 + x + 1)^2}{x^2(x+1)^2}\ y 
     + \frac{x^2 + x + 1}{x(x+1)^3}.
\end{align*}
We have $e_1=1$, $e_2 = x$, and $e_3 = y$, and we check by explicit computation
that the hypotheses of Proposition~\ref{prop:matrices} are met for 
all cosets of $G$ except for the trivial one.  We can therefore take
\begin{align*}
n &= 40\\
k &= 3t\\
d &= n -  (4t + 3).
\end{align*}
For a specific example, if we take $t = 7$ then we get a
$(40,21,9)$-code with locality~$3$.
\end{example}

\begin{example}
\label{ex:secondECGF32}
In this example, we again take $K = \FF_{32}$, we let $\alpha$ be an element 
of $K$ that satisfies $\alpha^5 + \alpha^2 + 1 = 0$, and we let $E$ be the
elliptic curve over $K$ defined by
\[ y^2 + x y = x^3 + x^2 + r^7 x.\]
The group $E(K)$ has order $42$, so we may take $r = 2$ and $G$ to be the 
cyclic subgroup of order $3$ consisting of the infinite point together with 
the points with $x$-coordinate equal to $\alpha^6$.
The quotient of $E$ by $G$ is a curve $E'$ that can be written
\[v^2 + u v = u^3 + u^2 + \alpha^{24} u + \alpha^6,\]
and the isogeny $\varphi$ is given by
\begin{align*}
u &= \frac{x (x + \alpha)^2}{(x + \alpha^6)^2}\\
v &= \frac{(x + \alpha)^2}{(x + \alpha^6)^2} y 
     + \frac{\alpha^6 x^2 + \alpha^{15} x + \alpha^{21}}{(x + \alpha^6)^3}.
\end{align*}
We have $e_1=1$ and $e_2 = x$, and we check by explicit computation
that the hypotheses of Proposition~\ref{prop:matrices} are met for 
all cosets of $G$ except the trivial one and the one that contains the unique 
nonzero $2$-torsion point of $E$.  To make this example work using the method
we have outlined above, we would therefore have to take our length $n$ to 
be~$36$, six less than the numbers of points in $E(K)$, which is smaller
than we might have hoped.  But there is another option, which we now explore.
\end{example}

In our general situation, 
let $L$ be the quadratic extension of $K$ and suppose there is a point 
$P$ be a point in $E(L)$ such that $P' := \varphi(P)$ does not lie in $E'(K)$.
Let $r' = \lceil r/2 \rceil$ and $t' = \lceil t/2\rceil$; then we can 
take $f_1,\ldots,f_t\in K(E')$ to be linearly independent elements of $L(t'P')$ and
$e_1,\ldots,e_r\in K(E)$ to be linearly independent elements of $L(r'P)$.
If we let $Q$ be the Galois conjugate of $P$, 
then we can take our divisor $D$ to be
\[D = r'(P + Q) + t' \sum_{R\in G} \left((P+R) + (Q + R)\right),\]
so that $D$ has degree $2r' + 2t'(r+1)$.

The point of taking functions with non-rational poles is that we can then
take the $Q_i$ to be all of the points of $E'(K)$ that lie in the image of $E(K)$
under $\varphi$, and take the points $P_{i,j}$ to be all of the points in~$E(K)$.
The Goppa code we obtain has parameters
\begin{align*}
n &= \#E(K),\\
k &= tr,\\
d &= n - (2r' + 2t'(r+1)).
\end{align*}
If $r$ and $t$ are both even, then $d = n - (tr + t + r).$
Again, if the matrices from Proposition~\ref{prop:matrices} meet the hypotheses of
that proposition, then our code will have locality~$r$.

\begin{example}
\label{ex:thirdECGF32}
Let us revisit the curves and maps from Example~\ref{ex:secondECGF32}.
We take $r=2$ and we take $G$ to be the same subgroup as before.
Let $\beta$ be a root of $x^2 + \alpha^9 x + \alpha^5$ in $L$,
and let $P$ be the point $(\beta, \beta^{626})$, so that
$P' = \varphi(P) = (\beta^{564}, \beta^{983})$.
We find that we can take $e_1 = 1$ and 
\[e_2 = \frac{y + \alpha^{26} x + \alpha^6}{x^2 + \alpha^9 x + \alpha^5}.\]
We check by explicit computation that the hypotheses of 
Proposition~\ref{prop:matrices} are met for all cosets of~$G$.
Therefore, for every even $t$ with $2\le t\le 12$
we obtain an LRC with locality $2$ and with
\begin{align*}
n &= 42,\\
k &= 2t,\\
d &= n - (3t + 2).
\end{align*}
For example, if we take $t=10$ we get a $(42,20,10)$-code with locality~$2$.
Our minimum distance is $4$ less than the Singleton bound.
\end{example}

\section{Locally recoverable codes from plane quartics}
\label{sec:quartics}

In this section, we construct locally-recoverable codes with locality $3$
by considering plane quartics whose automorphism groups contain a copy of
the Klein $4$-group~$V_4$.  Our analysis depends on whether or not the 
base field has characteristic~$2$. 

First let us consider the case where our base field $K = \Fq$ has odd
characteristic.  If $X$ is a nonsingular plane quartic over $K$ with
$V_4\subseteq \Aut X$, then $X$ is isomorphic to a curve defined by a
homogeneous quartic equation of the form $f(x^2,y^2,z^2) = 0$, where $f$ is a 
homogeneous quadratic.  One copy of $V_4$ in $\Aut X$
then contains the three commuting involutions given by 
$(x,y,z)\mapsto (-x,y,z)$ and $(x,y,z)\mapsto (x,-y,z)$
and $(x,y,z)\mapsto (x,y,-z)$.
Using the notation of Section~\ref{sec:general}, we can take $G$ to be this
copy of $V_4$ lying in $\Aut X$, and we take $Y$ to be the quotient curve $X/G$,
which is the genus-$0$ curve given by the homogeneous quadratic equation
$f(x,y,z) = 0$.  As usual, we let $\varphi\colon X\to Y$ be the canonical map.

Suppose we take $Q'$ to be any point of $Y(K)$ that is not in the image of
$X(K)$ under~$\varphi$.  For every $t > 0$ we let $f_1, \ldots, f_t$ be 
a basis for $\calL( (t-1) Q')$. 

For our three functions $e_1, e_2, e_3$ in $K(X)$
we take
\[
e_1 = 1, \qquad e_2 = x/z, \qquad e_3 = y/z,
\]
so that the $e_i$ all live in $\calL(D')$ for the degree-$4$ divisor formed
by intersecting $X$ with the line~$z=0$.  Then all of the 
products $e_i f_j$ are elements of $\calL(D)$, with $D = D' + 4\varphi^*(Q')$.
The divisor $D$ has degree $4t$.

We take the points $Q_1, \ldots, Q_s\in Y(K)$ to be the images of the
$P\in X(K)$ that have trivial stabilizers under the action of~$G$.  Each
$Q_i$ has, by definition, four preimages $P_{i,0}, P_{i,1}, P_{i,2}, P_{i,3}$ 
in~$X(K)$.  A point $P$ if $X(K)$ is one of the $P_{i,j}$ if and only if it does 
not lie on any of the three lines $x=0$, $y=0$, or $z=0$.

If $P\in X(K)$ is such a point, then $a := e_2(P)$ and $b := e_3(P)$ are nonzero elements of $K$,
and the matrix from Proposition~\ref{prop:matrices} is
\[
\begin{bmatrix*}[r]
1 &  a &  b \\
1 &  a & -b \\
1 & -a &  b \\
1 & -a & -b \\
\end{bmatrix*}.
\]
It is easy to check that all $3\times 3$ submatrices of this matrix have nonzero
determinant, so the hypotheses of the proposition are satisfied.  If $N$ is the 
number of points in $X(K)$ with $x(P), y(P), z(P)$ all nonzero, then for every
$t \le N/4 $ we have an LRC of locality $3$ with parameters
\begin{align*}
n&= N\\
k&= 3t\\
d&\ge n - 4t.
\end{align*}
The lower bound for $d$ is $2$ less than the generalized Singleton bound for LRCs.

\begin{example}
\label{ex:quartic7}
Consider the quartic curve $X$ over $\FF_{7}$ defined by
\[x^4 + y^4 + z^4 + 3 x^2 y^2 + 3 x^2 z^2 + 3 y^2 z^2 = 0.\]
There are $20$ points on $X$ with all coordinates nonzero.  We find that
for every $t\le 5$ we have a $(20, 3t, 20-4t)$-code with locality~$3$.
For example, with $t= 3$ we have a $(20,9,8)$-code with locality~$3$.
\end{example}

\begin{example}
\label{ex:quartic17}
Consider the quartic curve $X$ over $\FF_{17}$ defined by
\[x^4 + y^4 + 3 z^4 + 5 x^2 y^2 = 0.\]
There are $40$ points on $X$ with all coordinates nonzero.  We find that
for every $t\le 10$ we have a $(40, 3t, 40-4t)$-code with locality~$3$.
For example, with $t= 8$ we have a $(40,24,8)$-code with locality~$3$.
\end{example}

\begin{example}
\label{ex:quartic31}
Consider the quartic curve $X$ over $\FF_{31}$ defined by
\[x^4 + y^4 + z^4 + 4 x^2 z^2 + 7 x^2 y^2 = 0.\]
There are $60$ points on $X$ with all coordinates nonzero.  We find that
for every $t\le 15$ we have a $(60, 3t, 60-4t)$-code with locality~$3$.
For example, with $t= 13$ we have a $(60,39,8)$-code with locality~$3$.
\end{example}

Something completely analogous can be done over fields of characteristic~$2$.
Let $K=\Fq$ be a finite field, where $q$ is a power of~$2$.  
If $X$ is a nonsingular plane quartic over $K$ with
$V_4\subseteq \Aut X$, then $X$ is isomorphic to (the projective closure of)
a curve in the affine plane defined by a nonhomogeneous quartic equation 
of the form $f(x^2 + x,y^2+y) = 0$, where $f$ is a  bivariate quadratic.  
One copy of $V_4$ in $\Aut X$ then contains the three commuting involutions 
given by $(x,y)\mapsto (x+1,y)$ and $(x,y)\mapsto (x,y+1)$
and $(x,y)\mapsto (x+1,y+1)$.
Again we take $G$ to be this copy of $V_4$ lying in $\Aut X$, 
and we take $Y$ to be the genus-$0$ quotient curve $X/G$,
which has an affine model given by the equation $f(x,y) = 0$.  
Again we let $\varphi\colon X\to Y$ be the canonical map.

As in the odd characteristic case, 
we take $Q'$ to be any point of $K(Y)$ that is not in the image of
$X(K)$ under~$\varphi$.  For any $t > 0$ we let $f_1, \ldots, f_t$ be 
a basis for $\calL( (t-1) Q')$.  For our three functions in $K(X)$
we take $e_1 = 1$, $e_2 = x$, and $e_3 = y$. All of the 
products $e_i f_j$ are elements of $\calL(D)$, where $D = D' + 4\varphi^*(Q')$ and
where $D'$ is the divisor on $X$ obtained by intersecting $X$ with the line at infinity.
The divisor $D$ has degree $4t$.

We take the points $Q_1, \ldots, Q_s\in Y(K)$ to be the images of the
$P\in X(K)$ that have trivial stabilizers under the action of~$G$; these $P$ are
precisely the rational affine points of~$X$.  If $P = (a,b)$ is such a point, then
the matrix from Proposition~\ref{prop:matrices} is
\[
\begin{bmatrix*}[l]
1 & a   & b   \\
1 & a   & b+1 \\
1 & a+1 & b   \\
1 & a+1 & b+1 \\
\end{bmatrix*}.
\]
Once again, all $3\times 3$ submatrices of this matrix have nonzero
determinant, so the hypotheses of the proposition are satisfied.  If $N$ is the 
number of affine points in $X(K)$, then for every
$t \le N/4$ we have an LRC of locality $3$ with parameters
\begin{align*}
n&= N\\
k&= 3t\\
d&\ge n - 4t.
\end{align*}
The lower bound for $d$ is $2$ less than the generalized Singleton bound for LRCs.

\begin{example}
\label{ex:quartic8}
Consider the quartic curve $X$ over $\FF_{8}$ defined by
\[(x^2 + x)^2 + (x^2 + x) (y^2 + y) + (y^2 + y)^2 + 1 = 0.\]
There are $24$ affine points on~$X$.  We find that
for every $t\le 6$ we have a $(24, 3t, 24-4t)$-code with locality~$3$.
For example, with $t= 4$ we have a $(24,12,8)$-code with locality~$3$.
\end{example}

\begin{example}
\label{ex:quartic16}
Let $\alpha$ be an element of $\FF_{16}$ satisfying $\alpha^4 + \alpha + 1 = 0$.
Consider the quartic curve $X$ over $\FF_{16}$ defined by
\[(x^2 + x)^2 + (x^2 + x) (y^2 + y) + (y^2 + y)^2 + \alpha = 0.\]
There are $36$ affine points on~$X$.  We find that
for every $t\le 9$ we have a $(36, 3t, 36-4t)$-code with locality~$3$.
For example, with $t= 7$ we have a $(36,21,8)$-code with locality~$3$.
\end{example}

\begin{example}
\label{ex:quartic32}
Let $\alpha$ be an element of $\FF_{32}$ satisfying $\alpha^5 + \alpha^2 + 1 = 0$.
Consider the quartic curve $X$ over $\FF_{32}$ defined by
\[(x^2 + x)^2 + (x^2 + x) (y^2 + y) + \alpha^3 (y^2 + y)^2 + (y^2 + y) + \alpha^{26} = 0.\]
There are $64$ affine points on~$X$.  We find that
for every $t\le 16$ we have a $(64, 3t, 64-4t)$-code with locality~$3$.
For example, with $t= 14$ we have a $(64,42,8)$-code with locality~$3$.
\end{example}

The coverings $\varphi\colon X\to Y$ we have considered in this section
are biquadratic extensions of the projective line.  In the odd characteristic
case, this means that the function fields of our genus-$3$ curves $X$ are
obtained from $K(x)$ by adjoining the square roots of two separable polynomials
$f$ and $g$ of degree $3$ or~$4$ that have two roots in common (or one root in common,
if both $f$ and $g$ have degree~$3$).

Conversely, given two such functions $f$ and $g$, if we adjoin their square roots
to $K(x)$ we will either obtain the function field of a plane quartic as above,
or we will obtain the function field of a hyperelliptic curve of genus~$3$.  
(In slightly different terms, this situation was analyzed in~\cite[\S 4]{HoweLeprevostEtAl2000}.)
Generically, we obtain a plane quartic.  We close this section by considering the
non-generic case.

Using \cite[Prop.~14, p.~343]{HoweLeprevostEtAl2000}, one can show that over a
finite field of odd characteristic, if $X$ is a hyperelliptic curve of genus $3$
whose automorphism group contains a $V_4$-subgroup $G$ that does not contain the hyperelliptic
involution, then $X$ can be written in the form
\[
y^2 = a x^8 + b x^6 + c x^4 + b d^2 x^2 + a d^4,
\]
where the $V_4$-subgroup is generated by the involutions $(x,y)\mapsto(-x,y)$
and $(x,y) \mapsto (d/x, d^2 y/x^4)$.  We take $Y$ to be the quotient of $X$ by~$G$
and let $\varphi\colon X\to Y$ be the canonical map.

As in the plane quartic case, we can take $Q'$ to be any point of $Y(K)$ that
is not in the image of $X(K)$ under~$\varphi$.  For every $t > 0$ we let $f_1, \ldots, f_t$ be 
a basis for $\calL( (t-1) Q')$. 
For our three functions $e_1, e_2, e_3$ in $K(X)$ we take
$e_1 = 1$, $e_2 = x$, and $e_3 = x^2$, 
so that the $e_i$ all live in $\calL(D')$ where $D'$ is twice the (degree-$2$) divisor
at infinity.  Then all of the products $e_i f_j$ are elements of $\calL(D)$,
with $D = D' + 4\varphi^*(Q')$. The divisor $D$ has degree $4t$.

As in the plane quartic case, we take the points $Q_1, \ldots, Q_s\in Y(K)$
to be the images of the $P\in X(K)$ that have trivial stabilizers under the
action of~$G$, which means the points $P$ with finite nonzero $x$-coordinates whose
squares are neither $d$ nor~$-d$. Each $Q_i$ has
four preimages $P_{i,0}, P_{i,1}, P_{i,2}, P_{i,3}$ in~$X(K)$, and the $x$-coordinates of
the $P_{i,j}$ are distinct.

For a given helper set $H_i = \{P_{i,0}, P_{i,1}, P_{i,2}, P_{i,3}\}$ we find that the
matrix from Proposition~\ref{prop:matrices} is
\[
\begin{bmatrix}
1 &  x_0 &  x_0^2 \\
1 &  x_1 &  x_1^2 \\
1 &  x_2 &  x_2^2 \\
1 &  x_3 &  x_3^2 \\
\end{bmatrix},
\]
where $x_j$ is the $x$-coordinate of $P_{i,j}$.
The $3\times 3$ submatrices are all Vandermonde matrices, and have nonzero
determinant because the $x_j$ are distinct; thus we can construct an LRC from
this setup.
If $N$ is the number of points in $X(K)$ with $x(P)$ finite and nonzero
and with $x(P)^2 \ne \pm d$,
then for every $t \le N/4 $ we have an LRC of locality $3$ with parameters
\begin{align*}
n&= N\\
k&= 3t\\
d&\ge n - 4t.
\end{align*}
The lower bound for $d$ is $2$ less than the generalized Singleton bound for LRCs.

\begin{example}
\label{ex:hyperelliptic31}
Consider the hyperelliptic curve $X$ over $\FF_{31}$ defined by
\[y^2 = x^8 + 16 x^6 + 14 x^4 + 16 x^2 + 1.\]
There are $56$ points on $X$ whose $x$-coordinates are finite, nonzero, and whose
squares are not~$\pm 1$.  We find that
for every $t\le 14$ we have a $(56, 3t, 56-4t)$-code with locality~$3$.
For example, with $t= 12$ we have a $(56,36,8)$-code with locality~$3$.
This is worse than Example~\ref{ex:quartic31}, which uses a plane quartic over
$\FF_{31}$.  
\end{example}

Indeed, for most~$q$ we expect to get better results from plane quartics
than from hyperelliptic genus-$3$ curves, simply because there are more plane
quartics with $V_4$ actions than there are hyperelliptic genus-$3$ curves with
(nonhyperelliptic) $V_4$ actions;
this makes it likely that the genus-$3$ curve with a $V_4$ action
having the largest number of points will be a plane quartic.

\section{Locally recoverable codes from higher genus curves}
\label{sec:highergenus}

The possibility of using automorphism of curves of relatively high genus
in order to obtain LRCs was already used in~\cite{BargTamoEtAl2015}.  In this 
section we consider some examples of constructions of curves with automorphisms
of order~$3$, in order to obtain LRCs of locality~$2$.

Our analysis is simplest over finite fields $K = \Fq$ with $q\equiv 1\bmod 3$, because
then our degree-$3$ Galois extension $\varphi\colon X\to Y$ can be written as a Kummer extension.
Suppose $Y$ is a curve of genus $g$ over such a field $K$, and let $h$ be an element of 
$K(Y)$ such that the function field $K(X)$ is obtained from $K(Y)$ by adjoining an
element $z$ such that $z^3 = h$.  We can take our points $Q_i$ in $Y(K)$ to be the
points $Q$ such that $h(Q)$ is a nonzero cube in~$K$.  

Suppose we take our functions $f_1,\ldots,f_t$ to be a basis for $\calL(D')$ for
some divisor $D'$ on $Y$, and suppose we take our functions $e_1$ and $e_2$ 
to be $e_1 = 1$ and $e_2 = z$.  Note that the degree of $z$ (as a function on $X$)
is equal to the degree of $h$ (as a function on $Y$).  If we take $D$
to be the pullback $\varphi^*(D')$ of $D'$ to $X$, plus the polar divisor of $z$,
then each $e_i f_j$ lies in $\calL(D)$, and the degree of $D$ is $3$ times the
degree of $D'$ plus the degree of~$h$.

The parameters of the code we obtain are then:
\begin{align*}
n&= 2s\\
k&= 2t\\
d&\ge n - 3\deg D' - \deg h.
\end{align*}

In this section we work through several examples of this general construction.
We end with some examples where the base field does \emph{not} contain the
cube roots of unity.

\begin{example}
\label{ex:hermitian16}
Let $K = \FF_{16}$ and let $X$ be the Hermitian curve $y^4 + y = x^5$ over~$K$.
In~\cite{BargTamoEtAl2015}, the projections of this curve to the projective line
via the $x$- and $y$-coordinates were used to create LRCs of locality $3$ and~$4$.
Here we consider the group $G$ of automorphisms of $C$ generated 
by the automorphism of order $3$ given by
\[(x,y) \mapsto (\zeta x,\zeta^2 y),\]
where $\zeta\in K$ is a primitive cube root of unity.

The quotient $Y$ of $X$ by $G$ is a genus-$2$ curve that can be written
$z^2 + z = w^5$, with the cover $\varphi\colon X\to Y$ being given by 
$w = y/x^2$, $z = y/x^5$.  Note that then $y^3 = (z+1)/z$, and the cubic extension 
of function fields $\varphi^*\colon K(Y)\to K(X)$ is given by adjoining $y$ to
$K(Y)$.

Let $R_0$ be the point $(w,z) = (0,0)$ of $Y$ and let $R_1$ be the point $(0,1)$ of~$Y$.
We compute that for every integer $s \ge 3$ we have
\begin{align*}
\calL(sR_0 + sR_1)     &= \left\{1, \frac{1}{w}, \frac{1}{w^2}, \frac{z}{w^3}, \frac{1}{w^3}, \frac{z}{w^4}, \frac{1}{w^4}, \ldots, \frac{z}{w^s}, \frac{1}{w^s}\right\}\\
\intertext{and}
\calL(sR_0 + (s+1)R_1) &= \left\{1, \frac{1}{w}, \frac{1}{w^2}, \frac{z}{w^3}, \frac{1}{w^3}, \frac{z}{w^4}, \frac{1}{w^4}, \ldots, \frac{z}{w^s}, \frac{1}{w^s}, \frac{z}{w^{s+1}}\right\}.
\end{align*}
This shows that for any $t\ge 3$, if we take $f_1,\ldots,f_t$ to be the functions
\[f_1 = 1, \quad f_2 = \frac{1}{w}, \quad f_3 = \frac{1}{w^2}, \quad f_4 = \frac{z}{w^3}, \quad f_5 = \frac{1}{w^3}, \quad f_6 = \frac{z}{w^4}, \quad f_7 = \frac{1}{w^4}, \quad \ldots, \]
then $f_1,\ldots,f_t$ lie in $\calL(D')$ for a divisor $D'$ on $Y$ of degree $t+1$.

The points $R_0$ and $R_1$ are the only two points that ramify in the cover 
$\varphi\colon X\to Y$, and their preimages in $X$ are $\infty$ and $(0,0)$, 
respectively.  That leaves $63$ other points in $X(K)$ that are not fixed by the
group $G$, and these points map down to $21$ points on $Y$.  We take $Q_1,\ldots,Q_{21}$
to be those points, and for every $i$ we let $P_{i,1}$, $P_{i,2}$, and $P_{i,3}$ be
the three points lying over $Q_i$.  Note that none of the $P_{i,j}$ have $y$-coordinate equal
to~$0$, because the only point on $X$ with $y$-coordinate $0$ is the point $(0,0)$
that maps down to~$R_1$.

We take the functions $e_1$ and $e_2$ in $K(X)$ to be $e_1 = 1$ and $e_2 = y$.
For each $i$, if $P_{i,1} = (x_1,y_1)$, then $P_{i,2}$ and $P_{i,3}$ are
$(\zeta x_1, \zeta^2 y_1)$ and $(\zeta^2 x_1, \zeta y_1)$, and the matrix from
Proposition~\ref{prop:matrices} is
\[
\begin{bmatrix}
1 & y_1\\
1 & \zeta^2 y_1\\
1 & \zeta y_1\\
\end{bmatrix}.
\]
Since $y_1\ne 0$, each $2\times 2$ submatrix of this matrix is invertible, so
the code we construct as in Section~\ref{sec:general} has locality~$2$.

If the functions $f_1,\ldots,f_t$ in $K(Y)$ lie in $\calL(D')$ for a degree-$(t+1)$ divisor~$D$,
then the functions $e_i f_j$ in $K(X)$ all lie in the Riemann--Roch space of $\varphi^*(D) + 5\infty$,
a divisor of degree $3t + 8$.
We find that for every integer $t$ with $ 3\le t\le 18$, we obtain a Goppa
code with paramaters
\begin{align*}
n &= 63\\
k &= 2t\\
d &\ge n - (3t + 8).
\end{align*}
The Singleton bound on $d$ is $n - 3t + 2$, so we are $10$ below the Singleton bound.

By taking $t=16$ we obtain a $(63,32,7)$-code with locality $2$; by taking 
$t = 14$ we get a $(63,28,13)$-code with locality $2$. 
\end{example}

Over a base field containing primitive cube roots of unity,
we can produce examples of $\varphi\colon X\to Y$ that are Galois covers of an
elliptic curve $Y$, with $X$ of genus as large as~$7$, as follows:


We pick an arbitrary elliptic curve $Y$ over $K = \FF_q$ (with $q\equiv 1 \bmod 3$)
and consider functions $g$ that are ratios of functions of the form $ay + bx + c$.
Suppose neither the numerator nor the denominator of this representation of $g$ 
is a constant times a cube in $K(Y)$.
Let $X$ be the curve whose function field is obtained from $K(Y)$ by adjoining a cube
root $z$ of $g$, and let $\varphi\colon X\to Y$ be the resulting cover.  
Since $\varphi$ is ramified at exactly the zeros and poles of $g$ whose orders are not 
multiples of~$3$, the Riemann--Hurwitz formula tells us that the genus of $X$ is at most~$7$.

We can take our points $Q_1,\ldots,Q_s$ to be the points $Q$ of $Y$ such that $f(Q)$ is
a nonzero cube; such points split completely in the cover $\varphi$.  If $Y$ is given
by a Weierstrass equation 
\[y^2 + a_1 xy + a_3 y = x^3 + a_2 x^2 + a_4 x + a_6\]
then we can take our functions $f_1,\ldots, f_t$ to be the usual basis for
$\calL(t\infty)$, namely
\[ f_1 = 1, \quad f_2 =x, \quad f_3 = y, \quad f_4 = x^2, \quad f_5 = xy, \quad \ldots.\]
We take $e_1 = 1$ and $e_2 = z$.  Note that $e_2(P)\ne 0$ for every point $P$ lying over 
one of our $Q_i$, so that we will indeed obtain a code of locality $2$.  Also, if we
let $D'$ be the degree-$3$ divisor $\varphi^*(\infty)$, then every $e_i f_j$ lies in 
$\calL((t+1) D')$.  Using these functions and the points $P$ lying over the $Q_i$, we 
obtain an LRC with locality $2$ having parameters
\begin{align*}
n &= 3s,\\
k &= 2t,\\
d &\ge n - 3t - 3.
\end{align*}
The Singleton bound for $d$ is $n - 3t + 2$, so we are at worst $5$ away from this bound.

\begin{example}
\label{ex:triplecoverF16}
Let $K = \FF_{16}$ and let $\alpha\in K$ satisfy $\alpha^4 + \alpha + 1 = 0$.
We take $Y$ to be the elliptic curve $y^2 + \alpha y = x^3$
and we take $g$ to be the function 
\[g = \frac{y + \alpha^4 x + \alpha^3}{y + \alpha^{10} x + \alpha^3}.\]
The resulting curve $X$ has genus~$7$.
Of the $21$ points $Q$ on $Y$, there are $15$ for which $g(Q)$ is a nonzero cube.
Thus we may take $s=15$.  We find that for every $t$ with $1\le t \le 13$, we obtain
a $(45, 2t, 42 - 3t)$-code with locality $2$.  For example, with $t=11$ we get 
a $(45, 22, 9)$-code with locality $2$.
\end{example}

\begin{example}
\label{ex:triplecoverF64}
Let $K = \FF_{64}$, let $Y$ be the elliptic curve $y^2 + y = x^3 + 1$, and let 
and let $g = y/x$.  The curve $X$ has genus~$7$. Of the $81$ rational points $Q$ on $Y$, there are
$57$ such that $g(Q)$ is a nonzero cube.  Thus we can take $s = 57$, and we find that 
every $t$ with $1\le t \le 55$, we obtain
a $(171, 2t, 168 - 3t)$-code with locality $2$.  For example, with $t=51$ we get 
a $(171, 102, 13)$-code with locality $2$.
\end{example}

Over fields $K = \Fq$ that do not contain the cube roots of unity we can no longer
use Kummer theory to write down cyclic cubic extensions, but there  is still a normal
form for cubic Galois extensions.   Assume that the characteristic of $K$ is not $3$ and that 
$K$ does not contain the cube roots of unity, and let $Y$ be a curve over $K$.  Every
cubic Galois extension of the function field $K(Y)$ can be obtained by adjoining
a root of a polynomial of the form
\begin{equation}
\label{EQ:GaloisCubic}
z^3 - 3f z^2 - 3(f+1)z - 1,
\end{equation}
where $f$ is function on $Y$.  The ramification points of the
resulting cover $\varphi\colon X\to Y$ consist of the geometric points $P$
at which the function $f^2 + f + 1$ has a zero whose order is not a multiple of~$3$.
If $a$ is a function on $Y$ other than the constant functions $0$ and $-1$, we can set
$b = (a^3 - 3a - 1)/(3a^2 + 3a)$ and take $g = (bf - 1)/(f + b + 1)$.  Then adjoining
a root of 
\[
z^3 - 3g z^2 - 3(g+1)z - 1
\]
to $K(Y)$ will give a function field isomorphic to $K(X)$, and all functions $g$
that give extensions of $K(Y)$ isomorphic to $K(X)$ arise in this way.

If we let $w\in K(X)$ be a root of the polynomial in Equation~\eqref{EQ:GaloisCubic},
then
\[ 3 f =  w+ \frac{-1}{w+1} + \frac{-w-1}{w},\]
and from this we see (by looking, for example, at the poles of the right hand side)
that the degree of $f$ as a function on $X$ is three times the degree of $w$. 
It follows that the degree of $w$ is equal to the degree of $f$ as a function on~$Y$.

\begin{example}
\label{ex:triplecoverF32}
Let $K = \FF_{32}$ and let $\alpha\in K$ satisfy $\alpha^5 + \alpha^2 + 1 = 0$.
We take $Y$ to be the elliptic curve $y^2 + xy = x^3 + 1$, and we take $f$ to be
the degree-$3$ function
\[
f = \alpha^2 \frac{y + \alpha^3 x}{y + \alpha^2 x}.
\]
This gives rise to a genus-$7$ covering curve~$X$.
We check that $29$ points $Q$ of $Y$ split in the resulting extension
$\varphi\colon X\to Y$, so we can take up to $29$ points $Q_i$.
We take the functions $f_i$ to be the usual ones when $Y$ is an elliptic
curve:
\[ f_1 = 1, \quad f_2 =x, \quad f_3 = y, \quad f_4 = x^2, \quad f_5 = xy, \quad \ldots,\]
so that $\{f_1,\ldots,f_t\}$ form a basis for $\calL(t\cdot\infty).$
We take $e_1 = 1$ and $e_2 = w$, where $w$ satisfies the polynomial~\eqref{EQ:GaloisCubic}.
Note that the degree of $e_2$ is~$3$.
The value of $w$ on the three points of $X$ lying over a splitting point 
of $Y$ are distinct, so the hypotheses of Proposition~\ref{prop:matrices} 
are met.  We find that for every $t\le 28$ we can construct a code of locality $2$ with
\begin{align*}
n &= 87,\\
k &= 2t,\\
d &\ge n - 3t - 3.
\end{align*}
Our minimum distance is $5$ less than the Singleton bound for LRCs.

Compare this example to Example~\ref{ex:thirdECGF32}.  We see that we are 
$1$ farther away from the Singleton bound, but the dimension of the code can be 
taken to be much larger.
\end{example}

\section{The availability problem}
\label{sec:availability}

Recall that an $(n,k,d)$-code has \emph{locality} $r$ if for 
every index $i\in\{1,\ldots,n\}$ there is a recovery set of size at most $r$ such that the coordinate $i$
in every codeword can be determined from the coordinates in the set $I_i;$ see Definition~\ref{def:LRC}.

In some circumstances, it is desirable to have \emph{more that one} such
recovering set $I_i$ for each $i$.  The problem of constructing codes with
multiple recovering sets is called the \emph{availability problem}, because
such codes make it possible for multiple users to recover lost coordinates
with less impact on bandwidth usage.

\begin{definition}[LRC codes with availability]\label{def:LRC-t} 
A code $\calC\subset \Fq^n$ of size $q^k$ is said to have $t$ recovery sets if
for every coordinate $i\in\{1,\ldots,n\}$ and every $x\in \calC$ condition \eqref{eq:def1}
holds true for pairwise disjoint subsets $I_{i,j}\subset\{1,\ldots,n\}\backslash\{i\},|I_{i,j}|=r_j,j=1,\dots,t.$ 
\end{definition}

We will focus on the case where we have \emph{two} recovering sets $I_{i,1}$
and $I_{i,2}$ for each index $i$, where we assume further that for each $i$
the sets $I_{i,1}$ and $I'_{i,2}$ are disjoint, and have cardinalities $r_1$ and~$r_2$.
In this section we will show how to construct LRCs with dual recovering sets
from elliptic curves.  Some of the choices we make will be for convenience of
exposition; alternate choices could produce codes with better parameters.

Given integers $r_1>1$ and $r_2>1$, choose a prime power $q$ such that there is an 
elliptic curve $E$ over the finite field $K = \Fq$ such that $E(K)$ has 
two subgroups $G_1$ and $G_2$, of order $r_1+1$ and $r_2+1$, respectively, such that
$G_1\cap G_2 = \{0\}$.  Let $G$ be the subgroup of order $(r+1)(r'+1)$ generated
by $G_1$ and $G_2$, let $E_1$ and $E_2$ be the quotients of $E$ by $G_1$ and $G_2$,
respectively, and let $E'$ be the quotient of $E$ by $G$.  Then we have a 
diagram
\[
\xymatrix{
                      & E\ar_{\varphi_1}[dl]\ar^{\varphi_2}[dr] &                     \\
E_1\ar_{\psi_1}[dr]   &                                         & E_2\ar^{\psi_2}[dl] \\
                      &                    E',                  &                     \\
}
\]
where $\varphi_1$, $\varphi_2$, $\psi_1$, and $\psi_2$ are the natural isogenies.
We let $\phi := \psi_1\circ\varphi_1 = \psi_2\circ\varphi_2.$

Pick points $Q'\in E'(K)\setminus \varphi(E(K))$ and 
$Q_1\in E_1(K)\setminus \varphi_1(E(K))$ and
$Q_2\in E_2(K)\setminus \varphi_2(E(K))$.  Choose a basis
$\{e_{1,1}, e_{1,2}, \ldots, e_{1,r_2}\}$
for $\calL(r_2 Q_1)$ and a basis
$\{e_{2,1}, e_{2,2}, \ldots, e_{2,r_1}\}$
for $\calL(r_1 Q_2)$.  Given an integer $t>1$, choose a basis
$\{f_1, f_2, \ldots, f_t\}$
for $\calL(t Q')$.

Define divisors $D_1, D_2,$ and $D'$ on $E$ by setting $D_1 = \varphi_1^{-1}(Q_1)$,
$D_2 = \varphi_2^{-1}(Q_2)$, and $D' = \varphi^{-1}(Q')$, so that $D_1$ has degree $r_1 + 1$,
$D_2$ has degree $r_2 + 1$, and $D'$ has degree $(r_1+1)(r_2+1)$.

From this data, we construct a Goppa code as follows.
Let $P_1, \ldots, P_n$ be the points in $E(K)$.  Let $S$ be the set of all triples
$(h,i,j)$ of integers with $1\le h\le r_1$ and $1\le i\le r_2$ and $1\le j\le t$,
and let $T$ be the set of integers $i$ with $1\le i\le n$.
For every vector $\aa = (a_{h,i,j}) \in K^S$, 
let $f_\aa$ be the function
\[f_\aa := \sum_{h=1}^{r_1} \sum_{i=1}^{r_2} 
         \varphi_1^* e_{1,h} \ \varphi_2^* e_{2,i} \sum_{j=1}^t a_{h,i,j} \ \varphi^* f_j,\]
and set
$\gamma(\aa) = \bb = (b_i)\in K^T$ where
$b_{i} = f_\aa(P_{i})$ for all $i=1,\ldots,n$. 
Note that each function $f_\aa$ lies in $\calL(D)$, where $D = r_2 D_1 + r_1 D_2 + tD'$. 
It follows that the code we have constructed as parameters
\begin{align*}
n &= \#E(K),\\
k &= r_1 r_2 t,\\
d &= n - (r_1+1)(r_2+1)t - r_2 (r_1+1) - r_1(r_2+1) \\
  & = n - (r_1+1)(r_2+1)t - 2 r_1 r_2  - r_1 - r_2.
\end{align*}

Each point $P_i$ lies in two helper sets: its orbit $H_{i,1}$ under $G_1$,
and its orbit $H_{i,2}$ under~$G_2$.  This gives us two ways to view the
situation as as example of the construction in Section~\ref{sec:ECs}. 
We will explain this for the group $G_1$, the other choice being 
completely analogous.

There are $r_2 t$ functions of the form $e_{1,i} \,\psi_1^* f_j$ on the elliptic
curve $E_1$; let us label these function $\zeta_1,\ldots, \zeta_\tau$, where
$\tau = r_2 t$.  Let us also write $\varepsilon_i = \varphi_2^* e_{2,i}$ for
$i = 1,\ldots, r_1$.  Then, as in Section~\ref{sec:ECs}, we have a covering
$E \to E_1$ of degree $r_1 + 1$, together with $r_1$ functions $\varepsilon_i$ on $E$
and $\tau$ functions $\zeta_1,\ldots,\zeta_\tau$ on $E_1$.  The helper sets
$H_{i,1}$ will give us an LRC with locality $r_1$ provided the hypotheses
of Proposition~\ref{prop:matrices} hold.  In general these hypotheses will 
have to be checked for the particular choices made in the construction.
In the case where $r_1 = r_2 = 2$, these hypotheses are always satisfied.

\begin{theorem}
\label{T:availability2}
If, in the above construction, we have $r_1 = r_2 = 2$,
then the code we obtain is an LRC with $2$ recovery sets.
Its parameters satisfy
\begin{align*}
n &= \#E(K),\\
k &= 4 t,\\
d &= n - 9t - 12.
\end{align*}
\end{theorem}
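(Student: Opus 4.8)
The plan is to treat the parameter formulas and the availability property separately. The values of $n$, $k$, and $d$ require no new argument: they are exactly what the general construction above yields on setting $r_1 = r_2 = 2$, since then $k = r_1 r_2 t = 4t$ and
\[ d = n - (r_1+1)(r_2+1)t - 2 r_1 r_2 - r_1 - r_2 = n - 9t - 12. \]
So the whole content of the theorem is that for every coordinate the code admits two disjoint recovery sets, and I would obtain this by checking the hypotheses of Proposition~\ref{prop:matrices} for both families of helper sets.

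First I would record the combinatorics. The coordinates of the code are the points of $E(K)$, and a point $P_i$ lies in the two orbits $H_{i,1} = P_i + G_1$ and $H_{i,2} = P_i + G_2$. Translation by a nonzero point of an elliptic curve is fixed-point free, so each $H_{i,\ell}$ has exactly $r_\ell + 1 = 3$ elements; and $G_1\cap G_2 = \{0\}$ forces $H_{i,1}\cap H_{i,2} = \{P_i\}$. Hence, once Proposition~\ref{prop:matrices} is known to apply to each $H_{i,\ell}$, the sets $I_{i,\ell} := H_{i,\ell}\setminus\{P_i\}$ are recovery sets of size $r_\ell = 2$ for the coordinate $P_i$ that are disjoint from one another; this is exactly Definition~\ref{def:LRC-t} with $t = 2$.

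The heart of the proof is the matrix hypothesis for every $H_{i,1}$ (the case of $H_{i,2}$ being the same with $(G_1,\varphi_1)$ and $(G_2,\varphi_2)$ interchanged). By the reduction described just before the theorem, for $G_1$ we are in the setting of Section~\ref{sec:ECs} with the degree-$3$ cover $\varphi_1\colon E\to E_1$ and the two functions $\varepsilon_1 = \varphi_2^* e_{2,1}$ and $\varepsilon_2 = \varphi_2^* e_{2,2}$ on $E$. I would take the basis $e_{2,1} = 1$ and $e_{2,2} = u$ for $\calL(2Q_2)$, with $u$ nonconstant (its pullback $\varphi_2^* u$ is regular at every point of $E(K)$ because $Q_2\notin\varphi_2(E(K))$); then $M$ has constant first column, so all of its $2\times 2$ minors are nonzero exactly when $\varphi_2^* u$ takes three distinct values on $H_{i,1}$. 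Now $u$ makes $E_2$ a double cover of $\PP^1$ in which two distinct points $A,B$ have the same image if and only if $A+B = Q_2+Q_2$ in the group law of $E_2$. Writing $R = \varphi_2(P_i)$ and $h = \varphi_2(g)$ for a generator $g$ of $G_1$ — so $h$ has order $3$, since $G_1\cap G_2 = \{0\}$ makes $\varphi_2$ injective on $G_1$ — the three images are $R$, $R+h$, $R+2h$, and inspecting their pairwise sums shows that $\varphi_2^* u$ fails to separate $H_{i,1}$ if and only if $2R$ lies in the coset $(Q_2+Q_2) + \varphi_2(G_1)$.

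Finally I would rule this out, and this is the only step that is not purely formal. Since $G_1$ has odd order $3$ it lies in $2E(K)$ (explicitly $g = 2\,(2g)$), so $\varphi_2(G_1)\subseteq 2\varphi_2(E(K))$; hence $2R\in (Q_2+Q_2)+\varphi_2(G_1)$ for some $R\in\varphi_2(E(K))$ forces $Q_2+Q_2 \in 2\varphi_2(E(K))$, i.e.\ $Q_2\in\varphi_2(E(K))+E_2(K)[2]$. But the dual isogeny of $\varphi_2$ gives $3\,E_2(K)\subseteq\varphi_2(E(K))$, so the index $[E_2(K):\varphi_2(E(K))]$ is a power of $3$ and in particular odd; therefore $E_2(K)[2]\subseteq\varphi_2(E(K))$, and the bad case would force $Q_2\in\varphi_2(E(K))$, contrary to the choice of $Q_2$. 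So $\varphi_2^* u$ separates every $H_{i,1}$, the symmetric statement holds for the $H_{i,2}$, and the code has two disjoint recovery sets. I expect the $2$-torsion-versus-odd-degree interaction in this last paragraph to be the crux; everything else is bookkeeping.
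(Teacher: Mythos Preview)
Your proof is correct and follows essentially the same line as the paper's. Both arguments normalize $e_{2,1}=1$, reduce the matrix hypothesis to showing that two points in a $G_1$-orbit never have the same value under $\varphi_2^* e_{2,2}$, use that a degree-$2$ function in $\calL(2Q_2)$ identifies distinct points $A,B$ exactly when $A+B=2Q_2$, and then exploit the $3$-torsion of $\varphi_2(G_1)$ to conclude that a failure would place $Q_2$ in $\varphi_2(E(K))+E_2(K)[2]\subseteq\varphi_2(E(K))$. Your packaging via cosets and the explicit dual-isogeny index argument is slightly more abstract than the paper's point-by-point computation (which directly rewrites $\varphi_2(P)+\varphi_2(P+Q)=2Q_2$ as $2\varphi_2(P+2Q)=2Q_2$), but the content is identical; your added remark that $H_{i,1}\cap H_{i,2}=\{P_i\}$ makes the disjointness of the two recovery sets explicit, which the paper leaves implicit.
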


\begin{proof}
When $r_1 = r_2 = 2$, we have two order-$3$ subgroups $G_1$ and $G_2$ (intersecting only
at the identity) and two degree-$3$ maps $\varphi_1\colon E \to E_1$ and $\varphi_2\colon E\to E_2$.
With respect to the group $G_1$, the situation as described in the paragraph preceding the
statement of the theorem is as follows:

There are $2 t$ functions $\zeta_1,\ldots, \zeta_\tau$ on the elliptic curve $E_1$,
where $\tau = 2 t$, and we also have two functions $\varepsilon_1 = \varphi_2^* e_{2,1}$
and $\varepsilon_2 = \varphi_2^* e_{2,2}$ on $E$.  Without loss of generality
we may specify that the basis element $e_{2,1}$ of $\calL(2 Q_1$ is equal to $1$, 
so $\varepsilon_1=1$ as well.  To check the hypotheses of Proposition~\ref{prop:matrices}
we must ask: For every point $P$ of $E(K)$ and every nonzero element $Q$ in $G_1$,
is the matrix
\[
\begin{bmatrix}
1 & \varepsilon_2(P) \\
1 & \varepsilon_2(P+Q) \\
\end{bmatrix}
\]
invertible?  

Let $\alpha = \varepsilon_2(P) = e_{2,2}(\varphi_2(P))$.  The function $e_{2,2}$ on $E_2$
has degree $2$ has a double pole at $Q_2$ and no other poles, so the divisor of 
$e_{2,2} - \alpha$ is equal to $(R_1) + (R_2) - 2(Q_2)$ for some points $R_1$ and $R_2$ of~$E_2$.
One of these points, say $R_1$, must be $\varphi_2(P)$.  Furthermore, in the group $E_2(K)$
we must have $R_1 + R_2 = 2 Q_2$.

If the matrix above were not invertible then we would have
$\varepsilon_2(P+Q) = \alpha$, which would imply that $\varphi_2(P+Q)$ is either $R_1$ or $R_2$.
We cannot have $\varphi_2(P+Q) = R_1 = \varphi_2(P)$, because that could only happen
if $Q$ were in the kernel $G_2$ of $\varphi_2$, and we know that $G_1$ and $G_2$ have
trivial intersection.  On the other hand, suppose $\varphi_2(P+Q) = R_2$.  Then
\[
\varphi_2(P) + \varphi_2(P+Q) = 2 Q_2,
\]
and since $Q$ is a $3$-torsion point we have $2\varphi(P + 2Q) = 2 Q_2$.  This means
that $Q_2$ differs from $\varphi(P + 2Q)$ by a $2$-torsion point $T$.  But every $2$-torsion
point in $E_2(K)$ lies in  $\varphi_2(E(K))$ because $\varphi_2$ has degree~$3$, so 
$Q_2$ must lie in the image of $E(K)$ under $\varphi_2$, contrary to how it was chosen.
Therefore, the code we constructed is an LRC, with helper
sets equal to the cosets of~$G_1$.

The same argument shows that our code is an LRC with helper sets equal to the cosets of~$G_2$,
so we have constructed an LRC with $2$ recovery sets.  The parameters of the code were
calculated in the discussion before the atatement of the theorem.
\end{proof}

If $E$ is an elliptic curve over $K$ such that $E(K)$ 
has two order-$3$ subgroups $G_1$ and $G_2$ that intersect only 
at the identity, then all of the $3$-torsion points of $E$ are
rational over~$K$, and the subgroup of $E(K)$ generated by $G_1$ and $G_2$
is $E[3](K)$.  Note that then the curve $E'$ is isomorphic to $E$, and the isomorphism
can be chosen so that $\psi_2\circ \varphi_2 = \psi_1\circ \varphi_1 = 3$.

Note also if $E/K$ has all of its $3$-torsion defined over $K$, then the
Galois equivariance of the Weil pairing shows that $K$ must contain the 
cube roots of $1$, so that $q \equiv 1 \bmod 3.$

\begin{example}
\label{ex:availability64}
We construct an LRC over $K = \FF_{64}$ with $2$ recovery sets, each of size~$2$,
by using the above construction with $r_1 = r_2 = 2$.

Let $E$ be the elliptic curve $y^2 + y = x^3$ over~$K$.
We take $E_1 = E_2 = E' = E$, and we define commuting isogenies $\varphi_1$, $\varphi_2$, $\psi_1$,
and $\psi_2$ of degree $3$ by
\begin{align*}
\varphi_1(x,y) &= \left(\frac{x^3 + x^2 + x}{(x + 1)^2}, y + \frac{x^4 + x^3 + x^2 + x + 1}{(x + 1)^3}\right),\\
\varphi_2(x,y) &= \left(\frac{x^3 + x^2 + 1}{x^2}, y + \frac{x^4 + x + 1}{x^3}\right),\\
\end{align*}
and by taking $\psi_1 = \varphi_2$ and $\psi_2 = \varphi_1$.  
Then $\psi_1\circ\varphi_1= \psi_2\circ\varphi_2$
is equal to the multiplication-by-3 map on~$E$.

The kernel of $\varphi_1$ consists of the identity element and the two points 
in $E(K)$ with $x$-coordinate equal to $1$;
the kernel of $\varphi_2$ consists of the identity element and the two points
in $E(K)$ with $x$-coordinate equal to $0$.

Theorem~\ref{T:availability2} shows that the  construction from this section gives 
us an LRC with $2$ recovery sets, with parameters
$n = 81$, $k = 4 t$, and $d = 69 - 9t.$
For example, with $t=7$, we get an $(81,28,6)$-code over $\FF_{64}$ with two
recovery sets.
\end{example}


\section{Locally recoverable codes from algebraic surfaces}
\label{sec:higherdimension}

In this section we use the ideas from section~\ref{sec:general} to construct the first examples of locally recoverable codes using algebraic \emph{surfaces}.  Notably, in Example~\ref{ex:q^2+2} we construct a $(18,11,3)$-code with locality $2$ code that meets the Singleton-type bound~\eqref{eq:singleton}. This code has distance $d = 3$ while having large length with respect to the size of its alphabet: $n = q^2 + 2$. This example shows that algebraic surfaces can be used to produce optimal LRC codes of large length.  We also use a K3 surface in Example~\ref{ex:K3} to construct a code of locality $3$ with parameters $(24,17,3)$; this code also meets the Singleton-type bound and satisfies $n = q^2 - 1$.

\subsection{General construction}

We use smooth surfaces in $\PP^3$ over $K = \FF_q$ of the form
\[
X \colon \qquad w^{r+1} = f_{r+1}(x,y,z),
\]
where $f_{r+1}(x,y,z)$ is a homogeneous polynomial in $x$, $y$, and $z$ of degree $r+1$.  The projection map $\PP^3 \dasharrow \PP^2$ sending $[x,y,z,w] \mapsto [x,y,z]$ restricts to a morphism $\varphi\colon X \to \PP^2$.  If $r+1 \mid q-1$ then the nonempty fibers of $\varphi$ above $K$-points in $\PP^2$ outside the branch locus
\[
C \colon \qquad f_{r+1}(x,y,z) = 0
\]
consist of $r+1$ distinct points, because if the equation $w^{r+1} = \alpha \neq 0$ has a root, then it has $r+1$ distinct roots in $K^\times$.  Taking the fibers of $\varphi$ as helper sets, we construct a code with locality $r$.  Let $U \subseteq \PP^3$ denote the open affine subset $z\neq 0$; we shall use the $K$-points of $(X\setminus \varphi^{-1}(C)) \cap U$ as inputs for the evaluation code; hence, the length of the code will be 
\[
n = \#\left( (X\setminus \varphi^{-1}(C)) \cap U\right)(K).
\]
The Picard group of the base variety $\PP^2$ is isomorphic to $\ZZ$, and every effective divisor is linearly equivalent to $mL$ for some $m \geq 0$, where $L$ is a line on $\PP^2$.  We use sections contained in a vector space of the form $H^0(\PP^2,\scrO_{\PP^2}(mL))$ to construct functions $f$ as in section~\ref{sec:general}. This vector space can be identified with homogeneous polynomials of degree $m$ in $x$, $y$ and $z$.  Let $S(m)$ be the set of all triples $(i,j,l)$ of nonnegative integers with $i + j + l = m$.  For $(i,j,l) \in S(m)$, we let $f_{i,j,l} = x^iy^jz^l/z^m \in K(\PP^2)$.

Fix a positive integer $m$. For $i = 1,\dots,r$, let $e_i = (w/z)^{i - 1}$ be elements of $K(X)$. For $\aa = (a_{i,j,l}) \in K^{S(m)}, \bb = (b_{i,j,l}) \in K^{S(m-1)},\dots, \cc = (c_{i,j,l}) \in K^{S(m-r+1)}$, we let $f_{\aa,\bb,\dots,\cc} \in K(X)$ be the function
\[
\begin{split}
f_{\aa,\bb,\dots,\cc} 	&= e_1\cdot\Bigg(\sum_{(i,j,l) \in S(m)} a_{i,j,l}f_{i,j,l} \Bigg) + e_2\cdot\Bigg(\sum_{(i,j,l) \in S(m-1)} b_{i,j,l}f_{i,j,l} \Bigg) \\
				&\quad + \cdots + e_r\cdot\Bigg(\sum_{(i,j,l) \in S(m-r+1)} c_{i,j,l}f_{i,j,l} \Bigg) \\
			&= \frac{1}{z^m}\cdot\Bigg(\sum_{(i,j,l) \in S(m)} a_{i,j,l}x^iy^jz^l \Bigg) + \frac{w}{z^{m+1}}\cdot\Bigg(\sum_{(i,j,l) \in S(m-1)} b_{i,j,l}x^iy^jz^l \Bigg) \\
			&\quad + \cdots + \frac{w^{r-1}}{z^{m+1}}\cdot\Bigg(\sum_{(i,j,l) \in S(m-r+1)} c_{i,j,l}x^iy^jz^l \Bigg)
\end{split}
\]
and we define a map 
\[
\gamma\colon K^{S(m)} \times K^{S(m-1)}\times\cdots\times K^{S(m-r+1)} \to K^n\qquad \gamma(\aa,\bb,\dots,\cc) \mapsto (f_{\aa,\bb,\dots,\cc}(P)),
\]
 where $P \in \PP^3(K)$ ranges over the $K$-points of $(X\setminus \varphi^{-1}(C)) \cap U$.

Let $P_0 = [x_0,y_0,z_0,w_0]$ be a $K$-point in $(X\setminus \varphi^{-1}(C)) \cap U$, and let $\zeta \in K^\times$ be a primitive $(r+1)$-th root of unity.  The fiber of the map $\varphi$ above $Q = [x_0,y_0,z_0]$ consists of $P_0$ and the points 
\[
P_1 = [x_0,y_0,z_0,\zeta w_0], P_2 = [x_0,y_0,z_0,\zeta^2 w_0], \dots, P_r = [x_0,y_0,z_0,\zeta^r w_0].
\]
The matrix for the analog of Proposition~\ref{prop:matrices} in this setting is
\[
\begin{pmatrix}
1 & w_0/z_0 & \cdots & (w_0/z_0)^{r-1} \\
1 & \zeta w_0/z_0 & \cdots & \zeta^{r-1}(w_0/z_0)^{r-1}\\
 & & \vdots & \\
1 & \zeta^r w_0/z_0 & \cdots & \zeta^{r(r-1)}(w_0/z_0)^{r-1}\\
\end{pmatrix}
\]
Since $w_0/z_0 \ne 0$, each $r\times r$ minor of this $(r+1)\times r$ matrix is invertible: indeed, these minors are Vandermonde matrices, whose determinant is a product of factors of the form $(\zeta^i w_0/z_0 - \zeta^j w_0/z_0)$ with $i \neq j$. Hence, the code defined by $\gamma$ has locality~$r$.

The dimension of the code is
\[
\begin{split}
k 	&= \#S(m) + \#S(m-1) + \cdots + \#S(m-r+1)  - \dim(\ker \gamma) \\
	&= \binom{m+2}{2} + \binom{m+1}{2} + \cdots \binom{m-r+3}{2}- \dim(\ker \gamma).
\end{split}
\]
We shall discuss the minimum distance in specific cases below.

\subsection{Cubic surfaces: \texorpdfstring{$r=2$}{r=2}}

In the above framework, setting $r=2$ we consider smooth surfaces of the form
\[
X\colon\qquad w^3 = f_3(x,y,z),
\]
where $f_3(x,y,z)$ is a homogeneous cubic polynomial.  A smooth cubic surface is a del Pezzo surface of degree $3$. By~\cite[Theorem~1.1]{BFL}, this implies the inequalities
\[
q^2 - 2q + 1 \leq \#X(K) \leq q^2 + 7q + 1
\]
(recall our standing assumption that $3 \mid q - 1$, and hence $q \neq 2, 3$ or $5$).  Since $X$ is smooth, the Jacobian criterion shows that $C$ is smooth as well, and since $\phi|_{\phi^{-1}(C)}\colon \phi^{-1}(C) \to C$ is an isomorphism, we deduce that $\phi^{-1}(C)$ is a smooth curve of genus $1$. We conclude that
\[
q - 2\sqrt{q} + 1 \leq \#\phi^{-1}(C)(K) \leq q + 2\sqrt{q} + 1.
\]
The complement $Z$ of $U$ in $\PP^3$ is a plane; hence, its intersection with $X$ is a plane cubic that is not necessarily smooth, giving
\[
q - 2\sqrt{q} + 1 \leq \#(X \cap Z)(K) \leq 3q.
\]
The intersection $Z\cap \phi^{-1}(C)$ can contain at most three $K$-points, because its isomorphic image in $\PP^2$ is the intersection of $C$ with the line $z = 0$.  
Together, the above estimates give the following bounds for the length of the code $\gamma$:
\[
q^2 - 6q - 2\sqrt{q} \leq n\leq q^2 + 5q + 4\sqrt{q} + 2. 
\]
\begin{remark}
The above bounds for $n$ are probably not sharp.  We were unable, for example, to produce a cubic surface $X$ with $q^2 + 7q + 1$ $K$-points such that $\#\phi^{-1}(C)(K) = \#(X \cap Z)(K) = q - 2\sqrt{q} + 1$, which one would need to prove the upper bound for $n$ is sharp.
\end{remark}

The dimension of the code is
\[
\begin{split}
k 	&= \#S(m) + \#S(m-1)  - \dim(\ker \gamma) \\
	&= \binom{m+2}{2} + \binom{m+1}{2} - \dim(\ker \gamma) \\
	&= (m+1)^2 - \dim(\ker \gamma).
\end{split}
\]
To estimate the distance of the code, we compute the number of $K$-points of the intersection $\{z^{m+1}\cdot f_{\aa,\bb} = 0\} \cap X$.  The equation $z^{m+1}\cdot f_{\aa,\bb} = 0$ can be used to eliminate $w$ from the equation of $X$, leaving a single homogeneous equation in the variables $x$, $y$ and $z$ of degree $3m$, which we can think of as a plane curve (not necessarily smooth) of degree $3m$.  Serre's bound~\cite{SerreBound} tells us that this curve has at most $(3m)q + (q + 1) = (3m + 1)q + 1$ points, and hence
\begin{equation}
\label{eq:lowerbounddcubics}
d \geq n - (3m + 1)q - 1.
\end{equation}

We continue with a number of examples that illustrate the general construction above.
Some of the examples result in codes with the parameters $(n=(r+1)s, k=rs,2)$ and locality $r$, where $s$ is some positive number. Such codes can generally be obtained as $s$-fold repetitions of an $(r+1,r,2)$ single parity-check code, and their parameters meet the 
Singleton-type bound ~\eqref{eq:singleton} with equality. 

\begin{example}
\label{ex:smallcubic}
We begin with a simple example that produces a quaternary $(9,6,2)$-code with locality $2$ that meets the Singleton-type bound. Even though the resulting
code can be constructed without using surfaces, we include it because it provides a clear perspective of the general construction method.

We work over the finite field $\FF_4 = \{0,1,a,a^2\}$. Let
\[
f_3(x,y,z) := ax^3 + x^2y + axy^2 + ay^3 + a^2x^2z + a^2xyz + a^2xz^2 + ayz^2 + az^3,
\]
and consider the cubic surface
\[
	X\colon\qquad w^3 = f_3(x,y,z).
\]
The projection $\varphi\colon X\to \PP^2$, $[x,y,z,w] \mapsto [x,y,z]$ is branched along the plane curve
\[
	C\colon\qquad f_3(x,y,z) = 0.
\]
The curve $C$ has $4$ points over $\FF_4$. Of the remaining $17$ points in $\PP^2(\FF_4)$, only $3$ belong to $\varphi(X(\FF_4))$ but lie outside $C(\FF_4)$ and the line $z = 0$ in $\PP^2$.  They are
\[
Q_1 := [a^2 , a , 1], \quad Q_2 := [1 , a^2 , 1],\quad\text{and}\quad Q_3 := [0 , a^2 , 1].
\] 
The fibers of $\varphi$ over these points (i.e., the helper sets of the code) are
\begin{equation}
\label{eq:helpersets_ex}
\begin{split}
\varphi^{-1}(Q_1) &= \{[a^2 , a , 1, 1], [a^2 , a , 1, a], [a^2 , a , 1, a^2]\} \\
\varphi^{-1}(Q_2) &= \{[1 , a^2 , 1, 1], [1 , a^2 , 1, a], [1 , a^2 , 1, a^2]\} \\
\varphi^{-1}(Q_3) &= \{[0 , a^2 , 1, 1], [0 , a^2 , 1, a], [0 , a^2 , 1, a^2]\} 
\end{split}
\end{equation}
These fibers together give the points we use for the evaluation code.  Thus, the length of the code is $n = 9$.

We take $m = 2$ in the general construction above. For
\[
\aa = (a_1,a_2,a_3,a_4,a_5,a_6) \in (\FF_4)^6 \quad\text{and}\quad \bb = (b_1,b_2,b_3) \in (\FF_4)^3,
\]
we let $f_{\aa,\bb} \in K(\PP^2)$ be the function
\[
f_{\aa,\bb} = \frac{1}{z^2}\cdot(a_1x^2 + a_2xy + a_3xz + a_4y^4 + a_5yz + a_6z^2) + \frac{w}{z^{3}}\cdot(b_1x + b_2y + b_3z).
\]
and we define a map $\gamma\colon (\FF_4)^{6} \times (\FF_4)^{3} \to (\FF_4)^9$ by $\gamma(\aa,\bb) = (f_{\aa,\bb}(P)	)$, where $P \in \PP^3(\FF_4)$ ranges over the points in~\eqref{eq:helpersets_ex}.  The map $\gamma$ has a $3$-dimensional kernel, so the dimension of the code we get from it is $k = 9 - 3 = 6$.  We computed the $4^{9 - 3} - 1= 4095$ nonzero elements of the image of $\gamma$, and found the minimum distance of the code to be $d = 2$.  Note that the lower bound~\eqref{eq:lowerbounddcubics} gives $d \geq -20$, which is quite poor in comparison to the actual distance of the code.
\end{example}

\begin{example}
\label{ex:q^2+2}
The surface
\[
X\colon \qquad w^3 = xy^2 + y^3 + a^2x^2z + xyz + ay^2z + a^2z^3
\]
over $\FF_4 = \{0,1,a,a^2\}$ can be used to produce a $(18,11,3)$-code with locality $2$ code that meets the Singleton-type~\eqref{eq:singleton} bound.

In this case, the curve $C$ also has $4$ points over $\FF_4$. Of the remaining $17$ points in $\PP^2(\FF_4)$, $6$ belong to $\varphi(X(\FF_4))$ but lie outside $C(\FF_4)$ and the line $z = 0$ in $\PP^2$.  They are
\[
\begin{split}
Q_1 &:= [a^2 , 1 , 1], \quad Q_2 := [1 , a , 1],\quad Q_3 := [a^2 , a , 1]\\
Q_4 &:= [a , a^2 , 1], \quad Q_5 := [0 , a^2 , 1],\quad Q_6 := [a , 0 , 1].
\end{split}
\] 
The fibers of $\varphi$ over these points are
\begin{equation*}
\begin{split}
\varphi^{-1}(Q_1) &= \{[a^2 , 1 , 1, 1], [a^2 , 1 , 1, a], [a^2 , 1 , 1, a^2]\} \\
\varphi^{-1}(Q_2) &= \{[1 , a , 1, 1], [1 , a , 1, a], [1 , a , 1, a^2]\} \\
\varphi^{-1}(Q_3) &= \{[a^2 , a , 1, 1], [a^2 , a , 1, a], [a^2 , a , 1, a^2]\} \\
\varphi^{-1}(Q_4) &= \{[a , a^2 , 1, 1], [a , a^2 , 1, a], [a , a^2 , 1, a^2]\} \\
\varphi^{-1}(Q_5) &= \{[0 , a^2 , 1, 1], [0 , a^2 , 1, a], [0 , a^2 , 1, a^2]\} \\
\varphi^{-1}(Q_6) &= \{[a , 0 , 1, 1], [a , 0 , 1, a], [a , 0 , 1, a^2]\} 
\end{split}
\end{equation*}
These fibers together give the points we use for the evaluation code, giving a code of length $n = 18$.

Taking $m = 3$ in the general construction above gives a map
\[
\gamma\colon (\FF_4)^{10} \times (\FF_4)^{6} \to (\FF_4)^{18}
\]
that has a $5$-dimensional kernel. Hence, the dimension of the resulting code is $k = 16 - 5 = 11$.  We computed the $4^{11} - 1= 4,194,303$ nonzero elements of the image of $\gamma$, and found the minimum distance of the code to be $d = 3$.
\end{example}


\begin{example}
Up to $\FF_4$-isomorphism, there are $13$ smooth plane cubic curves $f_3(x,y,z) = 0$ over $\FF_4$.  Each curve gives rise to a cubic surface of the form $w^3 = f_3(x,y,z)$. In Table~\ref{ta:genus1}, we have chosen models of the $13$ plane cubics that maximize the length of the associated code on the cubic surface.
\begin{table}[t]
\begin{tabular}{ clll }
\toprule
Surface & $m$ & $(n,k,d)$ & SG \\ 
\midrule
{$w^3 = ax^3 + x^2y + a^2xy^2 + a^2x^2z + a^2y^2z + xz^2 + yz^2 + z^3$} & $3$ & $(30,15,3)$ & 6 \\
 & $4$ & $(30,19,2)$ & 1\\[0.5ex]
{$w^3 = a^2x^3 + x^2y + axy^2 + ax^2z + ay^2z + xz^2 + yz^2 + z^3$} & $3$ & $(30,15,3)$ & 6 \\
 & $4$ & $(30,19,2)$ & 1\\[0.5ex]
{$w^3 = x^2y + xy^2 + x^2z + y^2z + xz^2 + yz^2 + z^3$} & $3$ & $(30,15,3)$ & 6 \\
 & $4$ & $(30,19,2)$ & 1\\[0.5ex]
{$w^3 = a^2x^3 + ax^2y + xy^2 + x^2z + axyz + y^2z + a^2xz^2$} & $3$ & $(27,15,3)$ & 3\\
 & $4$ & $(27,18,2)$ & 0\\[0.5ex]
{$w^3 = ax^3 + x^2y + xy^2 + x^2z + xyz + a^2xz^2 + z^3$} &  $3$ & $(27,15,3)$ & 3\\
 & $4$ & $(27,18,2)$ & 0\\[0.5ex]
{$w^3 = a^2x^2y + xy^2 + a^2xyz + y^2z + z^3$} &  $3$ & $(27,15,3)$ & 3\\
 & $4$ & $(27,18,2)$ & 0\\[0.5ex]
{$w^3 = ax^3 + x^2y + xy^2 + a^2x^2z + xz^2 + z^3$} & $3$ & $(24,14,3)$ & 2\\
 & $4$ & $(24,16,2)$ & 0\\[0.5ex]
{$w^3 = a^2x^2y + xy^2 + a^2x^2z + a^2xyz + y^2z + xz^2 + a^2z^3$} & $3$ & $(21,13,2)$ & 1\\
 & $4$ & $(21,14,2)$ & 0\\[0.5ex]
{$w^3 = a^2x^3 + x^2y + xy^2 + x^2z + xyz + xz^2 + z^3$} & $3$ & $(21,13,2)$ & 1\\
 & $4$ & $(21,14,2)$ & 0\\[0.5ex]
{$w^3 = ax^3 + x^2y + xy^2 + x^2z + xyz + xz^2 + z^3$} & $3$ & $(21,13,2)$ & 1 \\
 & $4$ & $(21,14,2)$ & 0\\[0.5ex]
{$w^3 = a^2x^3 + ax^2y + xy^2 + x^2z + xz^2 + z^3$} & $3$ & $(18,11,2)$ & 1\\
 & $4$ & $(18,12,2)$ & 0\\[0.5ex]
{$w^3 = ax^3 + a^2x^2y + xy^2 + x^2z + xz^2 + z^3$} & $3$ & $(18,11,2)$ & 1\\
 & $4$ & $(18,12,2)$ & 0\\[0.5ex]
{$w^3 = ax^3 + a^2x^2y + a^2xy^2 + ay^3 + x^2z + y^2z + xz^2 + yz^2 + z^3$} & $3$ & $(12,7,3)$ & 0\\
 & $4$ & $(12,8,2)$ & 0\\
\bottomrule
\end{tabular}
\vskip1ex
\caption{Maximal length codes arising from all $\FF_4$-isomorphism classes of cubic surfaces over $\FF_4$ of the form $w^3 = f_3(x,y,z)$. All codes have locality $r = 2$. SG stands for Singleton gap, i.e., the difference between the value of the distance and the right-hand side of \eqref{eq:singleton}.}
\label{ta:genus1}
\end{table}
The surfaces of Examples~\ref{ex:smallcubic} and~\ref{ex:q^2+2} are both isomorphic to the surface in the fourth row from the bottom of Table~\ref{ta:genus1}, but the models we chose for the curve $C$ in these examples give rise to codes of  shorter length.
\end{example}

\begin{example}
\label{ex:F7}
The surface
\[
X\colon \qquad w^3 = 6x^3 + 5xy^2 + y^3 + 2x^2z + 3xyz + 4y^2z + 4xz^2 + 6yz^2
\]
over $\FF_7$ can be used to produce a $(48,31,3)$-code with locality $2$ code that meets the Singleton-type~\eqref{eq:singleton} bound (note that $n = q^2 - 1$).  There are $16$ points in $\PP^2(\FF_7)$ that belong to $\varphi(X(\FF_7))$ but lie outside the branch curve and the line $z = 0$ in $\PP^2$, which is why the resulting code has length $n = 48$.  We take $m = 5$ in our construction, and obtain a map
\[
\gamma\colon (\FF_7)^{21} \times (\FF_7)^{15} \to (\FF_7)^{48}
\]
that has a $5$-dimensional kernel. Hence, the dimension of the resulting code is $k = 36 - 5 = 31$.  Using \textsc{Magma}~\cite{magma}, we found the minimum distance of the code to be $d = 3$.
\end{example}

\subsection{K3 surfaces: \texorpdfstring{$r=3$}{r=3}}

In the above framework, when setting $r=3$, we consider smooth surfaces of the form
\[
X\colon\qquad w^4 = f_4(x,y,z),
\]
where $f_4(x,y,z)$ is a homogeneous polynomial of degree $4$.  A smooth quartic surface in $\PP^3$ is a K3 surface.  These surfaces can also be used to construct codes that meet the Singleton-type bound~\eqref{eq:singleton}, yet their minimum distance is $d > 2$ and they have large length compared to the size of the alphabet $(n \sim q^2)$. These codes have locality $3$.

\begin{example}
\label{ex:K3}
We construct a $(24,17,3)$-code with locality $3$ over the field $\FF_5$ (note that $n = q^2 -1$ in this case.).  The information rate for this code is $17/24$, and it meets the Singleton-type bound~\eqref{eq:singleton}.

Let
\[
\begin{split}
f_4(x,y,z) &:= 3x^4 + x^3y + 4x^2y^2 + 4xy^3 + 4y^4 + x^3z + 2x^2yz  \\
&\quad + xy^2z+ 4y^3z + 3x^2z^2 + xyz^2 + y^2z^2 + 2xz^3 + 3z^4,
\end{split}
\]
and consider the K3 surface
\[
	X\colon\qquad w^4 = f_4(x,y,z).
\]
The projection $\varphi\colon X\to \PP^2$, $[x,y,z,w] \mapsto [x,y,z]$ is branched along the plane curve
\[
	C\colon\qquad f_4(x,y,z) = 0.
\]
The curve $C$ has $2$ points over $\FF_5$. Of the remaining $29$ points in $\PP^2(\FF_4)$, only $6$ belong to $\varphi(X(\FF_5))$ but lie outside $C(\FF_5)$ and the line $z = 0$ in $\PP^2$.  They are
\[
\begin{split}
Q_1 &:= [3 , 0 , 1], \quad Q_2 := [4 , 0 , 1],\quad Q_3 := [4 , 2 , 1]\\
Q_4 &:= [3 , 3 , 1], \quad Q_5 := [4 , 3 , 1],\quad Q_6 := [3 , 4 , 1].
\end{split}
\] 
The fibers of $\varphi$ over these points (i.e., the helper sets of the code) are
\begin{equation*}
\begin{split}
\varphi^{-1}(Q_1) &= \{[3 , 0 , 1, 1], [3 , 0 , 1, 2], [3 , 0 , 1, 3], [3 , 0 , 1, 4]\} \\
\varphi^{-1}(Q_2) &= \{[4 , 0 , 1, 1], [4 , 0 , 1, 2], [4 , 0 , 1, 3], [4 , 0 , 1, 4]\} \\
\varphi^{-1}(Q_3) &= \{[4 , 2 , 1, 1], [4 , 2 , 1, 2], [4 , 2 , 1, 3], [4 , 2 , 1, 4]\} \\
\varphi^{-1}(Q_4) &= \{[3 , 3 , 1, 1], [3 , 3 , 1, 2], [3 , 3 , 1, 3], [3 , 3 , 1, 4]\} \\
\varphi^{-1}(Q_5) &= \{[4 , 3 , 1, 1], [4 , 3 , 1, 2], [4 , 3 , 1, 3], [4 , 3 , 1, 4]\} \\
\varphi^{-1}(Q_6) &= \{[3 , 4 , 1, 1], [3 , 4 , 1, 2], [3 , 4 , 1, 3], [3 , 4 , 1, 4]\}
\end{split}
\end{equation*}
These fibers together give the points we use for the evaluation code, giving a code of length $n = 24$.

Taking $m = 4$ in the general construction above gives a map
\[
\gamma\colon (\FF_5)^{15} \times (\FF_5)^{10} \times (\FF_5)^{6} \to (\FF_5)^{24}
\]
that has a $14$-dimensional kernel. Hence, the dimension of the resulting code is $k = 31 - 14 = 17$.  Using {\tt magma}~\cite{magma}, we found the minimum distance of the code to be $d = 3$.

\end{example}

\subsection{Surfaces of general type: \texorpdfstring{$r=4$}{r=4}}

Consider smooth surfaces of the form
\[
X\colon\qquad w^5 = f_5(x,y,z),
\]
where $f_5(x,y,z)$ is a homogeneous polynomial of degree $5$.  A smooth quintic surface in $\PP^3$ is a surface of general type.  In our construction, we require that $r + 1 \mid q - 1$.  Since $r = 4$, the smallest possible $q$ we can use is $11$.

\begin{example}
\label{ex:F11}
We produce a $(110,87,3)$-code with locality $4$ over the field $\FF_{11}$.  The code meets the Singleton-type bound~\eqref{eq:singleton}. Let
\[
\begin{split}
f_5(x,y,z) &:= 9x^5 + 2x^4y + x^3y^2 + 5x^2y^3 + 6xy^4 + 4y^5 \\
&\quad + 6x^4z + 3x^3yz + 3x^2y^2z + 8xy^3z + 2y^4z  \\
&\quad + 10x^3z^2 + 3x^2yz^2 + 7xy^2z^2 + 6y^3z^2  \\
&\quad + 3x^2z^3 + 5xyz^3 + 8y^2z^3 + 6xz^4 + 6yz^4
\end{split}
\]
and consider the surface
\[
	X\colon\qquad w^5 = f_5(x,y,z).
\]
\end{example}
The plane curve $C$ given by $f_5(x,y,z) = 0$ has $13$ points, and of the remaining $120$ points in $\PP^2(\FF_{11})$, only $22$ belong to $\varphi(X(\FF_{11}))$ but lie outside $C(\FF_{11})$ and the line $z = 0$ in $\PP^2$.  The fibers of morphism $\varphi$ above these points together give the points we use for the evaluation code, giving a code of length $n = 110$.  We use $m = 8$ in our general construction to construct the evaluation map
\[
\gamma\colon (\FF_{11})^{45} \times (\FF_{11})^{36} \times (\FF_{11})^{28} \times (\FF_{11})^{21} \to (\FF_{11})^{110},
\]
which has a $43$-dimensional kernel.  Hence, the dimension of the resulting code is $k = 130 - 43 = 87$.  Using {\tt magma}~\cite{magma}, we found the minimum distance of the code to be $d = 3$.

\begin{question}
\label{q:arbitrary_r}
Fix a positive integer $r$.  Does there always exist a smooth surface of the form $w^{r + 1} = f_{r+1}(x,y,z)$ over a finite field $\FF_q$, whose associated code as above meets the Singleton-type bound, has minimum distance $3$, length $n \sim q^2$, and locality $r$?
\end{question}


\section{Conclusion}\label{sec:conclusion}
In Table \ref{table2} we collect the parameters of the code families explicitly mentioned in this paper. We did not make an attempt to make an exhaustive list of short LRCs from curves, or optimize their parameters, which would hardly be possible given the variety of the constructions studied above. Codes constructed from 
cubic surfaces are listed in Table~\ref{ta:genus1} above and not included here, although the surfaces from Examples~\ref{ex:smallcubic} and~\ref{ex:q^2+2} are (these examples meet the Singleton bound \eqref{eq:singleton} with equality). 

{\small 
\begin{table}[t]
\begin{tabular}{ccccclccc}
\toprule
\multirow{2}{*}{$q$} & \multirow{2}{*}{$n$} &\multirow{2}{*}{$k$}    & Designed  & \multirow{2}{*}{$r$} & \multirow{2}{*}{Remarks}   &\multirow{2}{*}{SG} & \multirow{2}{*}{Curve or Surface} & \multirow{2}{*}{Reference} \\
& & & distance & & & & & \\
\midrule
 4   
     &18  &11    &3           &2   &                   &0   &cubic surface                  &Ex.~\ref{ex:q^2+2}\\[0.5ex]
 5   &24  &17    &3           &3   &                   &0   &quartic K3 surface             &Ex.~\ref{ex:K3}\\[0.5ex]
 7   &20  &$3t$  &$20-4t$     &3   &$1\le t\le 4$      &2   &plane quartic                  &Ex.~\ref{ex:quartic7}\\
     &48  &$31$  &$3$         &2   &                   &0   &cubic surface                  &Ex.~\ref{ex:F7}\\[0.5ex]
 8   &24  &$3t$  &$24-4t$     &3   &$1\le t\le 5$      &2   &plane quartic                  &Ex.~\ref{ex:quartic8}\\[0.5ex] 
 11  &110 &87    &3           &4   &                   &0   &quintic surface                &Ex.~\ref{ex:F11}\\[0.5ex]
 16  &36  &$3t$  &$36-4t$     &3   &$1\le t\le 8$      &2   &plane quartic                  &Ex.~\ref{ex:quartic16}\\
     &45  &$2t$  &$42-2t$     &2   &$1\le t\le 13$     &2   &genus-$7$ curve                &Ex.~\ref{ex:triplecoverF16}\\ 
     &63  &$2t$  &$51-3t$     &2   &$1\le t\le 16$     &2   &genus-$6$ Hermitian curve      &Ex.~\ref{ex:hermitian16}\\[0.5ex]
 31  &56  &$3t$  &$56-4t$     &3   &$1\le t\le 13$     &2   &genus-$3$ hyperelliptic curve  &Ex.~\ref{ex:hyperelliptic31}\\
     &60  &$3t$  &$60-4t$     &3   &$1\le t\le 14$     &2   &plane quartic                  &Ex.~\ref{ex:quartic31}\\[0.5ex]
 32  &40  &$3t$  &$37-4t$     &3   &$1\le t\le 9$      &5   &elliptic curve                 &Ex.~\ref{ex:firstECGF32}\\
     &42  &$2t$  &$40-3t$     &2   &$1\le t\le 13$     &4   &elliptic curve                 &Ex.~\ref{ex:thirdECGF32}\\
     &64  &$3t$  &$64-4t$     &3   &$1\le t\le 15$     &2   &plane quartic                  &Ex.~\ref{ex:quartic32}\\     
     &87  &$2t$  &$84-3t$     &2   &$1\le t\le 27$     &2   &genus-$7$ curve                &Ex.~\ref{ex:triplecoverF32}\\[0.5ex]
 64  &171 &$2t$  &$168-3t$    &2   &$1\le t\le 55$     &5   &genus-$7$ curve                &Ex.~\ref{ex:triplecoverF64}\\      
\bottomrule
\end{tabular}
\vskip1ex
\caption{\small Some examples of codes constructed in this paper.  SG stands for Singleton gap, i.e., the
difference between the value of the distance and the right-hand side of \eqref{eq:singleton}. Additional codes over $\FF_4$ with locality 2 are
listed in Table~\ref{ta:genus1} above.}\label{table2}
\end{table}
}

Note that several examples in the table have parameters meeting the Singleton bound. In the classical case of $r=k$ the bound
\eqref{eq:singleton} reduces to the inequality $d\le n-k+1,$ and codes that meet this bound are called MDS codes. 
It is easy to construct MDS codes of any length $n$ with distance $d=1$, $2$, $n$, and such codes are called trivial. At the same
time, all the known nontrivial MDS codes have length $n\le q+1,$ except for even $q$ and $k=3$ or $q-1$ when there are codes
with $n=q+2$. The famous MDS conjecture asserts that longer MDS codes over any field $\Fq$ do not exist \cite[pp.~326ff.]{MaSl1977},
\cite[Sec.~11.4]{Roth06}. Recently this statement
was proved in some cases \cite{Ball12,BallBeule12}, but the full MDS conjecture remains open. In view of
the above examples, generalizing this conjecture to codes with locality constraints appears to be a difficult question.
In particular, there exists a locally recoverable ``MDS" code of length $18$ over $\FF_4$ whose length $n=q^2 + 2$, a code of length $24$ over $\FF_5$ whose length $n=q^2 - 1$, and a code of length $110$ over $\FF_{11}$ whose length is $n = q^2 - q$. Moreover, Question~\ref{q:arbitrary_r} is meant to suggest that there may exist a family of optimal LRCs with $n$ in the order 
of $q^2$ for all values of locality $r$. Thus, extending the MDS conjecture to codes with locality requires much more detailed
understanding of geometric or combinatorial conditions for equality in the bound~\eqref{eq:singleton}.

Finally, we compare codes from Table \ref{table2} with the parameters of best known linear codes with no 
locality assumptions taken from \cite{Grassl16}, which lists these parameters for $q=2,\dots,9$. 

\begin{center}
\begin{tabular}{ccccc}
\toprule
$q$      &$n$      &$k$      &$d$(LRC)     &Best known $d$ \\
\midrule
4         &18     &11     &3          &3\\
5         &24     &17     &3          &6\\       
7         &20     &9      &8          &9\\
8         &24     &12     &8          &10\\
\bottomrule
\end{tabular}
\end{center}

We note again that we made no attempt to optimize the distance of codes in Table \ref{table2}.

\subsection*{Acknowledgments}
We are grateful to the organizers of the workshop on
``Algebraic Geometry for Coding Theory and Cryptography,'' 
February 22--26, 2016, at the Institute for Pure and Applied Mathematics
on the campus of the University of California, Los Angeles, for bringing us together and providing the environment to pursue the research
presented in this paper. A.~B.~was supported by NSF grants CCF142955 and CCF1618603. A.~V.-A.~was supported by NSF CAREER grant DMS-1352291.


\bibliographystyle{hplaindoi} 
\bibliography{LRC}

\end{document}